\documentclass{article}
\usepackage[utf8]{inputenc}

\title{Optimal Variance Reduction in Randomized Experiments}
\author{Amir Najmi \and Michael D. Keselman}
\date{August 21, 2026}

\usepackage{natbib}
\usepackage{graphicx}
\usepackage{amsmath}
\usepackage{amssymb}
\usepackage{amsthm}
\usepackage{hyperref}

\hypersetup{pdfauthor={Name}}

\def\E{{\mathbb{E} \,}}
\def\Var{{\mathbb{V}\mathrm{ar} \,}}
\def\VarDM{\texorpdfstring{\mathrm{\mathbb{V}ar_{DM}} \,}{VarDM}}

\newcommand{\seq}[1]{\{{#1}\}}
\newcommand{\norm}[1]{\widetilde{#1}}
\newcommand{\mean}[1]{\overline{{#1}}}
\newcommand{\metric}{\mathcal{M}}
\newcommand{\metricadj}{\mathcal{M}_{\mathrm{adj}}}
\newcommand{\tmetricadj}{\mathcal{M}_{\mathrm{adj}}^t}

\newcommand{\real}{\mathbb{R}}
\newcommand{\diff}{f}
\newcommand{\diffprimea}{\diff'_a}
\newcommand{\diffprimeb}{\diff'_b}

\newcommand{\diffadj}{h}
\newcommand{\diffadjprimea}{\diffadj'_1}
\newcommand{\diffadjprimeb}{\diffadj'_2}
\newcommand{\diffadjprimec}{\diffadj'_3}
\newcommand{\diffadjprimed}{\diffadj'_4}
\newcommand{\twocolvec}[2]{\begin{bmatrix}{#1} \\ {#2}\end{bmatrix}}
\newcommand{\fourcolvec}[4]{\begin{bmatrix}{#1} \\ {#2} \\ {#3} \\ {#4}\end{bmatrix}}
\newcommand{\tworowvec}[2]{\begin{bmatrix}{#1} & {#2}\end{bmatrix}}
\newcommand{\fourrowvec}[4]{\begin{bmatrix}{#1} & {#2} & {#3} & {#4}\end{bmatrix}}
\begin{document}

\maketitle

\begin{abstract}
    This paper describes an approach to variance reduction in randomized experiments using side information (covariates of the response unaffected by treatment). As with Double ML \citep{doubleml}, models of arbitrary complexity may be employed without concern for bias due to overfitting or regularization.
    For additive treatment metrics, the approach minimizes variance optimally. For more complex treatment metrics (e.g., multiplicative and ratios) the Delta Method estimate of variance is minimized. Through a variety of examples, the paper also explores modeling considerations.
\end{abstract}

\section{Framing}
Consider the tuple of iid observations most generally given by
\[
\left(X, Y(0), Y(1) \right)
\]
Here $X$ is a vector of covariates. $Y(0)$ and $Y(1)$ are responses to control and treatment in the potential outcomes \citep{imbens_rubin} sense. In other words, only one of $Y(0)$ and $Y(1)$ is observed depending on whether a given observation is assigned to control or treatment. While $Y(0)$ and $Y(1)$ are often scalar, it is sometimes useful to extend $Y(0)$ and $Y(1)$ to be a vector.
For example, $Y$ may represent the number of recommendations accepted by a user and the number of recommendations offered to the user. Experiments might seek to measure the overall success rate of recommendations.
This paper deals with the general case, but we start with the simple case of scalar $Y(0)$ and $Y(1)$.

Units of experimentation (e.g., users) are randomly sampled from a population and independently assigned the set $C$ for control and $T$ for treatment. 
The size of these sets is denoted by $|C|$ and $|T|$ and define the fraction of experimental units in treatment to be $\gamma=\frac{|T|}{|T|+|C|}$.
We use indices to denote these sampled observations assigned to sets $C$ and $T$. Note that $Y_i(1)$ is only observed for $i \in T$, and $Y_j(0)$ observed for $j \in C$. While $Y_i(1)$ indicates a random variable, $y_i(1)$ is the (fixed) observation itself. Thus $y_i(0)$ is not available for $i \in T$, nor are $y_j(1)$ for $j \in C$.

\section{Additive treatment effects}
We start with the simpler case of measuring an additive treatment effect. Here, the estimand is
\begin{align*}
\delta = \E Y(1) - \E Y(0)
\end{align*}
The most basic estimator of this effect is the difference in sample means:
\begin{align}
    \metric &= \frac{1}{|T|}\sum_{i \in T}Y_i(1) - \frac{1}{|C|} \sum_{j \in C}Y_j(0)
\end{align}
Our approach to variance reduction involves selecting a function of the covariates $g(x)$ to form a new estimator. Crucially, \textbf{$g$ is chosen without access to the experiment data}.

For additive treatment effects we define a new estimator $\metricadj$:
\begin{align}
    \metricadj &= \metric - \left( 
                \frac{1}{|T|}\sum_{i \in T}g(X_i) - 
                \frac{1}{|C|} \sum_{j \in C}g(X_j) \right)
\end{align}
In expectation, $\metric$ and $\metricadj$ are equal:
\begin{align*}
    \E\metricadj &= \E\metric - \left( 
                \frac{1}{|T|}\sum_{i \in T}\E g(X_i) - 
                \frac{1}{|C|} \sum_{j \in C}\E g(X_j) \right) \\
                &= \E\metric - (\E g(X) - \E g(X)) \\
                &= \E\metric
\end{align*}
Intuitively, the additional term in $\metricadj$ is simply the difference in sample means of $g(X)$ in each arm of the randomized experiment. Since $g$ is a fixed function without knowledge of the experiment data, these sample averages are equal in expectation. Note also that the choice of $g$ is equivalent up to an added constant, since it cancels out in $\metricadj$.

We can now compare the variance of $\metric$ and $\metricadj$:
\begin{align*}
    \Var \metric &= 
        \frac{1}{|T|^2} \sum_{i \in T} \Var Y_i(1)  +
        \frac{1}{|C|^2} \sum_{j \in C} \Var  Y_j(0)
        \tag{independent samples}\\
        &= 
        \frac{1}{|T|^2} (|T| . \Var Y(1) )  +
        \frac{1}{|C|^2} (|C| . \Var Y(0) )
        \tag{iid observations} \\
        &= 
        \frac{1}{|T|} \Var Y(1)  +
        \frac{1}{|C|} \Var Y(0)
\end{align*}
Likewise
\begin{align}
    \Var \metricadj &= 
        \frac{1}{|T|} \Var (Y(1) - g(X))  +
        \frac{1}{|C|} \Var (Y(0) - g(X))\label{eq:var_m_one}
\end{align}
To proceed further, we need the following result.
\newtheorem{lemma}{Lemma}
\begin{lemma}
\label{lemma:optimize}
Let $X \in \real^p$ be a random vector and $Y_0,Y_1 \in \real$ be random scalar variables jointly distributed with $X$. Define $J(f)$ to be a scalar functional as follows
\begin{align*}
    J(f) = \frac{1}{w_1} \Var\left(Y_1 - f(X)\right) + \frac{1}{w_0} \Var\left(Y_0 - f(X)\right) 
\end{align*}
Then $J(f)$ has minimum at
\begin{align*}
    f^*(x) &= (1-\eta) \E[Y_1 | X = x] + \eta \E[Y_0 | X=x]
\end{align*}
where $\eta = w_1/(w_1 + w_0)$. Moreover
\begin{align*}
    J(f) = J(f^*) + \left(\frac{1}{w_1} + \frac{1}{w_0} \right)
    \Var \left(f(X) - f^*(X) \right)
\end{align*}
\end{lemma}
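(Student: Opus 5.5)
The plan is to prove the decomposition identity directly; the minimality of $f^*$ then follows at once, since the extra term is a nonnegative multiple of a variance. Write $m_1(x) = \E[Y_1 \mid X=x]$ and $m_0(x) = \E[Y_0 \mid X=x]$, so that $f^* = (1-\eta) m_1 + \eta m_0$. Set $d(X) = f(X) - f^*(X)$. Then
\begin{align*}
Y_1 - f(X) &= \bigl(Y_1 - f^*(X)\bigr) - d(X), \\
Y_0 - f(X) &= \bigl(Y_0 - f^*(X)\bigr) - d(X).
\end{align*}
Expanding both variances in $J(f)$ gives
\begin{align*}
J(f) = J(f^*) + \left(\frac{1}{w_1} + \frac{1}{w_0}\right)\Var d(X) - 2\,C(d),
\end{align*}
where the cross term is
\begin{align*}
C(d) = \frac{1}{w_1}\mathrm{Cov}\bigl(Y_1 - f^*(X), d(X)\bigr) + \frac{1}{w_0}\mathrm{Cov}\bigl(Y_0 - f^*(X), d(X)\bigr).
\end{align*}
The whole proof therefore reduces to showing $C(d) = 0$ for every (square-integrable) $d$.

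To show this, I would condition on $X$ and use the tower property. For any function $d$ of $X$,
\begin{align*}
\mathrm{Cov}\bigl(Y_k, d(X)\bigr) = \mathrm{Cov}\bigl(m_k(X), d(X)\bigr), \qquad k \in \{0,1\},
\end{align*}
because $Y_k - m_k(X)$ is mean-zero and uncorrelated with every function of $X$. Substituting, $C(d) = \mathrm{Cov}\bigl(r(X), d(X)\bigr)$ with
\begin{align*}
r = \frac{1}{w_1}(m_1 - f^*) + \frac{1}{w_0}(m_0 - f^*).
\end{align*}
Using $f^* = (1-\eta)m_1 + \eta m_0$, we have $m_1 - f^* = \eta(m_1 - m_0)$ and $m_0 - f^* = -(1-\eta)(m_1 - m_0)$. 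Hence
\begin{align*}
r = (m_1 - m_0)\left(\frac{\eta}{w_1} - \frac{1-\eta}{w_0}\right).
\end{align*}
With $\eta = w_1/(w_1+w_0)$ and $1-\eta = w_0/(w_1+w_0)$, the bracket is $\frac{1}{w_1+w_0} - \frac{1}{w_1+w_0} = 0$. So $r \equiv 0$, $C(d)=0$, and the identity follows.

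The only delicate step is the choice of $\eta$, which is exactly what makes $r$ vanish; the computation above confirms that the stated weighting is the right one. I would also note that $f^*$ is a minimizer rather than the unique one: $J$ is invariant under adding a constant to $f$, which is consistent with $\Var(f-f^*) = 0$ in that case. Finally, I would state the assumptions needed for the expansions to be valid, namely finite second moments of $Y_0$, $Y_1$ and $f(X)$, and $w_0, w_1 > 0$.
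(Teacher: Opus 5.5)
Your proof is correct, but it takes a different route from the paper's. The paper first uses the law of total variance to peel off $K_2 = \frac{1}{w_1}\E\left(\Var[Y_1|X]\right) + \frac{1}{w_0}\E\left(\Var[Y_0|X]\right)$. It then rewrites what remains as $\left(\frac{1}{w_1}+\frac{1}{w_0}\right)$ times a $(1-\eta):\eta$ weighted average of the squared errors $\E\left(\E[\norm{Y}_k|X] - \norm{f}(X)\right)^2$. Finally it introduces an auxiliary binary variable $U$ given $X$, equal to $\E[\norm{Y}_1|X]$ with probability $1-\eta$ and to $\E[\norm{Y}_0|X]$ with probability $\eta$. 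The conditional bias--variance identity $\E[(U-\norm{f}(X))^2|X] = \Var(U|X) + (\norm{f}(X)-\E[U|X])^2$ then yields both $f^*$ and the penalty term at once.

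You instead expand around $f^*$ and kill the cross term directly. The tower property reduces it to $\mathrm{Cov}(r(X),d(X))$, and the choice $\eta = w_1/(w_1+w_0)$ is exactly what makes $r\equiv 0$. Your argument is shorter and avoids the somewhat informal construction of $U$. It also makes transparent that the stated $\eta$ is precisely the orthogonality (first-order) condition, rather than something that falls out of a mixture construction.

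What the paper's route buys in exchange is an explicit value of the minimum,
\[
J(f^*) = \frac{1}{w_1+w_0}\Var\left(\E[Y_1-Y_0|X]\right) + \frac{1}{w_1}\E\left(\Var[Y_1|X]\right) + \frac{1}{w_0}\E\left(\Var[Y_0|X]\right).
\]
This separates the irreducible variance into residual noise plus a contribution from heterogeneity of the treatment effect in $X$. The paper's route also gives a reading of $f^*$ as the conditional mean of a $(1-\eta):\eta$ mixture of the two arms' conditional means.
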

\begin{proof}
\begin{align}
J(f) 
    &= \frac{1}{w_1} \Var\left(Y_1 - f(X)\right) + \frac{1}{w_0} \Var\left(Y_0 - f(X)\right) \notag \\
    &=  \frac{1}{w_1} \Var\left(\E[Y_1|X] - f(X)\right) + 
        \frac{1}{w_0} \Var\left(\E[Y_0|X] - f(X)\right) \notag \\
    &\quad +
       \frac{1}{w_1} \E\left(\Var[Y_1|X]\right) + 
       \frac{1}{w_0} \E\left(\Var[Y_0|X]\right)
    \tag{by Law of Total Variance}
\end{align}
Thus
\begin{align*}
J(f) &= K_1 \left(
        (1-\eta) \Var\left(\E[Y_1|X] - f(X)\right) + 
        \eta \Var\left(\E[Y_0|X] - f(X)\right)
       \right) +  K_2 \\
    K_1 &= \frac{w_1+w_0}{w_0 w_1} = \frac{1}{w_1} + \frac{1}{w_0} \\
    K_2 &=  \frac{1}{w_1} \E\left(\Var[Y_1|X]\right) + 
       \frac{1}{w_0} \E\left(\Var[Y_0|X]\right)
\end{align*}
and $K_1$, $K_2$ do not depend on $f$.
Define the following ``mean-removed'' random variables and function:
\begin{align*}
    \norm{f}(X) &= f(X) - \E f(X) \\
    \norm{Y}_1 &= Y_1 - \E Y_1 \\
    \norm{Y}_0 &= Y_0 - \E Y_0
\end{align*}
The mean-removed version allow us to switch between variance and squared expectation. Hence
\begin{align*}
    J(f)
    &= K_1 \left(
        (1-\eta)\E\left( \E [\norm{Y}_1|X] - \norm{f}(X)\right)^2 + 
        \eta \E\left( \E [\norm{Y}_0|X] - \norm{f}(X)\right)^2
    \right) + K_2 \\
    &= K_1 \E \left[
        (1-\eta)\left( \E [\norm{Y}_1|X] - \norm{f}(X)\right)^2 + 
        \eta \left( \E [\norm{Y}_0|X] - \norm{f}(X)\right)^2
    \right] + K_2
\end{align*}
Consider the binary random variable $U|X$ which has the following distribution:
\begin{align*}
U|X = 
\begin{cases} 
\E [\norm{Y}_1|X]  & \text{with probability } 1-\eta \\
\E [\norm{Y}_0|X] & \text{with probability } \eta 
\end{cases}
\end{align*}
We can write $J(f)$ in terms of $U$ as follows:
\begin{align*}
J(f)
    &= K_1 \E \left[
        \E \left[ \left(U - \norm{f}(X)\right)^2 \bigg| X \right]
    \right] + K_2
\end{align*}
Focusing on the inner expectation
\begin{align*}
\E \left[ \left(U - \norm{f}(X)\right)^2 \bigg| X \right]
&= \Var(U | X) + \left(\norm{f}(X) - \E [U | X] \right)^2 \\
&= \eta (1 - \eta) \left(\E [\norm{Y}_1|X] - \E [\norm{Y}_0|X] \right)^2
+ \left(\norm{f}(X) - \E [U | X] \right)^2
\end{align*}
Now define $\norm{f}^*$
\begin{align*}
    \norm{f}^*(x)  &= \E [U | X=x] \\
    &= (1-\eta) \E[\norm{Y}_1|X=x] + \eta \E[\norm{Y}_0|X=x]
\end{align*}
Adding back in the outer expectation, we have
\begin{align*}
    \E \left[
        \E \left[ \left(U - \norm{f}(X)\right)^2 \bigg| X \right]
    \right]
&= \eta (1 - \eta) \E \left(\E [\norm{Y}_1|X] - \E [\norm{Y}_0|X] \right)^2
+ \E \left(\norm{f}(X) - \norm{f}^*(X) \right)^2 \\
&= \eta (1 - \eta) \Var \left(\E [Y_1|X] - \E [Y_0|X] \right)
+ \Var \left(f(X) - f^*(X) \right)
\end{align*}
where the statement of the lemma defines
\begin{align*}
    f^*(x) &= (1-\eta) \E[Y_1|X=x] + \eta \E[Y_0|X=x]
\end{align*}
Hence
\begin{align*}
J(f) &= K_1 \eta (1 - \eta) \Var \left(\E [Y_1-Y_0|X]\right) + K_1 \Var \left(f(X) - f^*(X) \right) + K_2 \\
&= \frac{1}{w_1+w_0} \Var \left(\E [Y_1-Y_0|X]\right) + K_2 + 
\left(\frac{1}{w_1} + \frac{1}{w_0}\right) \Var \left(f(X) - f^*(X)\right) \\
&= J(f^*) + 
\left(\frac{1}{w_1} + \frac{1}{w_0}\right) \Var \left(f(X) - f^*(X)\right)
\end{align*}
Since variance is non-negative, it follows that $f^*$ minimizes $J$.
\end{proof}

We now simply apply Lemma \eqref{lemma:optimize} to Equation \eqref{eq:var_m_one} to find that $\Var \metricadj$ is minimized over $g$ when
\begin{align}
    g(x)  &= (1-\gamma) \E[Y(1)|X=x] + \gamma \E[Y(0)|X=x] \label{eq:g_diff}
\end{align}

It is worth pondering this expression:
\begin{itemize}
    \item The best choice of $g$ is a convex combination of the conditional expectations of treated and untreated responses $Y(1)$, $Y(0)$. These represent the  portion of the response that can be predicted by the covariate $X$.
    
    \item We have a single $g$ for both treatment and control, so it needs to be a compromise between treatment and control.
    
    \item The weights in this convex combination are ``backwards'', i.e., weight $\gamma$ for control and $(1-\gamma)$ for treatment. This makes sense if we imagine a very small treatment arm $\gamma \to 0$. In this case, the uncertainty in $\metricadj$ comes almost entirely from the treatment arm, and hence $g$ needs to focus on the treatment at the expense of control (vice versa for $\gamma \to 1$).
    \end{itemize}

\subsection{Algorithm: additive treatment effect}
\label{section:additive_algorthm}
From Equation \eqref{eq:g_diff} we know how to compute $g$, but it requires knowledge of $Y(W)|X$ where $W \in (0, 1)$ represents the assignment, i.e., either control $W=0$ or treatment $W=1$. We do this by modeling on held-out data. This is easiest achieved using $K$-fold cross-fitting (described below) resulting in $K$ estimates. These estimates are not independent, but each is unbiased. As such, they can be averaged to produce a single unbiased estimate.
\begin{enumerate}
    \item Divide the data randomly into $K$ equal folds. Let $(T_k, C_k)$ represent the samples in treatment and control for the $k^\mathrm{th}$ fold.
    
    \item Set the training data to all data except the $k^\mathrm{th}$ fold, i.e., $(T-T_k, C-C_k)$.
    
    \item Without getting into modeling considerations, fit a model of the form $\Hat{y}(W,x)$ that minimizes $(\Hat{y}-y)^2$ on the training data. There are no constraints on the structure of this model in terms of how it trains, regularizes or borrows strength.
    
    \item Using this model, compute
    \begin{align}
    g(x) &= (1-\gamma)\Hat{y}(W=1,x) + \gamma \Hat{y}(W=0,x)\label{eq:gnorm}
    \end{align}
    for each observation in $T_k \cup C_k$. We do not necessarily observe $y(0)$ and $y(1)$ for the same $x$, which is why we do not fit a model for $g(x)$ directly.
    
    \item Estimate
    \begin{align*}
        \metricadj^k &=  \left(\frac{1}{|T_k|}\sum_{i \in T_k}y_i - 
                               \frac{1}{|C_k|}\sum_{j \in C_k}y_j \right) \\
                    &\quad -
                         \left(\frac{1}{|T_k|}\sum_{i \in T_k}g(x_i) - 
                               \frac{1}{|C_k|}\sum_{j \in C_k}g(x_j) \right)
    \end{align*}

    \item Repeat for each fold and form the average $\mean{\metricadj}$. This is the final estimate.
\end{enumerate}
Note that each estimate $\metricadj^k$ is unbiased, even though estimates are not independent. As a result, their 
average $\mean{\metricadj}$ is also unbiased.
Where the data is big (but effects small), $K=2$ will suffice.

\section{Asymptotic variance}
The variance of more complex treatment effects may not have simple closed forms, making them harder to optimize. The approach taken in this paper is to optimize the asymptotic variance of a $\sqrt{n}$-consistent estimator. This approach is appropriate for the large samples we find in experiments run by large scale online services. For small sample experiments, it may serve as a heuristic when applicable.

Thus, we introduce the notion of $\VarDM$ that quantifies asymptotic variance.

\theoremstyle{definition}
\newtheorem*{definition}{Definition}

\subsection{Delta Method Variance $\VarDM$}
\theoremstyle{definition}
$\VarDM$ is an operator on sequences of random variables.
Denote the sequence of random variables $A_1, A_2, \cdots, A_n$ by $\seq{A}$. 
We further denote the following:
\begin{align*}
\seq{c^T A} &:= c^T A_1, c^T A_2, \cdots, c^T A_n \\
\seq{h(A)} &:= h(A_1), h(A_2), \cdots, h(A_n) \\
\seq{A+B} &:= (A_1 + B_1), (A_2 + B_2), \cdots, (A_n + B_n) 
\end{align*}
for any constant $c$ and any function $h$.

\begin{definition}
Let $\seq{A}$ be a sequence of random variables such that
\begin{align*}
    A_n &\xrightarrow{\mathcal{P}} \theta_A \\
    \sqrt{n}(A_n -\theta_A) &\xrightarrow{\mathcal{D}} X_A
\end{align*}
for some constant $\theta_A$ and some random variable $X_A$ for which $\E X_A = 0$ and $\Var{X_A}$ exists, then by definition
\begin{align*}
    \VarDM \seq{A} := \Var{X_A}
\end{align*}
Otherwise $\VarDM\seq{A}$ is undefined.
\end{definition}

\subsection{Properties of $\VarDM$}
Next are a few properties of $\VarDM$ that we will need in later sections. The most important of these is based on the Delta Method theorem.
\newtheorem{theorem}{Theorem}
\begin{theorem}
\label{theorem:delta_method}
Let $\seq{A}$ be a sequence of random variables such that $A_n \xrightarrow{\mathcal{P}} \theta_A$ and
$\VarDM \seq{A}$ exists. Let $h$ be a differentiable function such that $h'(\theta_A) \neq 0$ and is defined. Then
\begin{align*}
    \VarDM \seq{h(A)}  &= \VarDM \seq{h'(\theta_A)^T A}
\end{align*}
\end{theorem}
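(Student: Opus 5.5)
The plan is to reduce the statement to the classical Delta Method applied to the limit law defining $\VarDM$. By hypothesis $A_n \xrightarrow{\mathcal{P}} \theta_A$ and $\sqrt{n}(A_n-\theta_A)\xrightarrow{\mathcal{D}} X_A$, with $\E X_A=0$ and $\Var X_A$ finite. I will show that both sequences $\seq{h(A)}$ and $\seq{h'(\theta_A)^T A}$ satisfy the conditions of the definition, and that they have the \emph{same} limiting random variable $h'(\theta_A)^T X_A$. Their $\VarDM$ values then coincide.

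First consider the linear side. Set $c=h'(\theta_A)$, the gradient at $\theta_A$. By the continuous mapping theorem, $c^T A_n \xrightarrow{\mathcal{P}} c^T\theta_A$. Also $\sqrt{n}(c^TA_n - c^T\theta_A) = c^T\sqrt{n}(A_n-\theta_A)\xrightarrow{\mathcal{D}} c^T X_A$. The limit has mean $c^T\E X_A=0$ and variance $c^T\Var(X_A)\,c$, which is finite because $X_A$ has finite second moments. So $\VarDM\seq{c^TA}=\Var(c^TX_A)$.

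Next consider the nonlinear side. Continuity of $h$ at $\theta_A$, implied by differentiability, gives $h(A_n)\xrightarrow{\mathcal{P}} h(\theta_A)$. Differentiability at $\theta_A$ lets me write $h(a)=h(\theta_A)+c^T(a-\theta_A)+r(a)\|a-\theta_A\|$ with $r(a)\to 0$ as $a\to\theta_A$ and $r(\theta_A)=0$. Then
\begin{align*}
\sqrt{n}\left(h(A_n)-h(\theta_A)\right) = c^T\sqrt{n}(A_n-\theta_A) + r(A_n)\,\sqrt{n}\|A_n-\theta_A\| .
\end{align*}
The factor $\sqrt{n}\|A_n-\theta_A\|$ converges in distribution, so it is $O_P(1)$. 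Also $r(A_n)\xrightarrow{\mathcal{P}}0$ by continuous mapping, since $r$ is continuous at $\theta_A$. Hence the remainder is $o_P(1)$. Slutsky's theorem then gives $\sqrt{n}(h(A_n)-h(\theta_A))\xrightarrow{\mathcal{D}} c^TX_A$, the same limit as on the linear side. Therefore $\VarDM\seq{h(A)}=\Var(c^TX_A)=\VarDM\seq{h'(\theta_A)^TA}$.

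I expect the main obstacle to be a subtlety in the definition rather than the analysis: $\VarDM$ is defined as the variance of a distributional limit $X_A$. I must check that this is well defined, i.e.\ that the limit law is unique, so that any two choices of $X_A$ have equal variance. This holds because weak limits are unique in distribution, and variance depends only on the distribution. The remainder control also deserves care in the vector case: the $o(\|a-\theta_A\|)$ term must be handled as above rather than through a mean-value theorem, which fails for vector-valued arguments. The hypothesis $h'(\theta_A)\neq 0$ is not needed for the equality itself. It only ensures the common limit is nondegenerate, and I will note this in the proof.
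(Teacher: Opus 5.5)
Your proposal is correct and follows essentially the paper's route. The paper also obtains $c^T\Sigma c$ for the linear sequence by continuous mapping, then cites the Delta Method to get $h'(\theta_A)^T\Sigma\, h'(\theta_A)$ for $h(A_n)$; you instead prove the Delta Method inline (first-order remainder plus Slutsky) and identify the common limit $h'(\theta_A)^T X_A$ directly, and your side remarks on well-posedness of the definition and on $h'(\theta_A)\neq 0$ being unnecessary are also correct.
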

\begin{proof}
Since $\VarDM \seq{A}$ exists, define $\VarDM \seq{A} = \Sigma$.
From the definition of $\VarDM$, there exists a random variable $X_A$ such that
\begin{align*}
    \sqrt{n}(A_n -\theta_A) &\xrightarrow{\mathcal{D}} X_A \\
    \E X_A &= 0 \\
    \Var X_A &= \Sigma
\end{align*}
Consider the sequence $\seq{c^T A}$ where $c$ is a constant. This sequence must converge in distribution to $ c^T X_A$ (by Continuous Mapping Theorem).
Thus
\begin{align}
    \VarDM \seq{c^T A}  &= \Var( c^T X_A ) \notag \\
    &= c^T \Sigma c \label{eq:varDM_cA}
\end{align}
Now consider the sequence $\seq{h(A)}$.
According to the Delta Method Theorem, there is a random variable $X_h$ such that:
\begin{align*}
    \sqrt{n}(h(A_n) - h(\theta_A)) &\xrightarrow{\mathcal{D}} X_h \\
    \E X_h &= 0 \\
    \Var X_h &= h'(\theta_A)^T \Sigma h'(\theta_A) \\
    \therefore \VarDM \seq{h(A)} &= h'(\theta_A)^T \Sigma h'(\theta_A)
\end{align*}
Further note that when $c = h'(\theta_A)$ in \eqref{eq:varDM_cA}, we have
\begin{align*}
    \VarDM \seq{h'(\theta_A)^T A} &= h'(\theta_A)^T \Sigma h'(\theta_A) \\
    \implies \VarDM \seq{h(A)} &= \VarDM \seq{h'(\theta_A)^T A}
\end{align*}
\end{proof}

\begin{lemma}
\label{lemma:vardm_sum}
Let $\seq{A}$ and $\seq{B}$ be sequences of random variables such that $A_n$ and $B_n$ are independent.
If $\VarDM \seq{A}$ and $\VarDM \seq{B}$ both exist then for any scalar constants $a$ and $b$
\begin{align*}
    \VarDM \seq{a A + b B} &= a^2 \VarDM \seq{A} + b^2 \VarDM \seq{B}
\end{align*}
\end{lemma}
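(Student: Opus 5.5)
The plan is to work directly from the definition of $\VarDM$. Since $\VarDM \seq{A}$ and $\VarDM \seq{B}$ exist, there are constants $\theta_A, \theta_B$ and random variables $X_A, X_B$ with zero mean and finite variances $\sigma_A^2, \sigma_B^2$ such that $A_n \xrightarrow{\mathcal{P}} \theta_A$, $B_n \xrightarrow{\mathcal{P}} \theta_B$, $\sqrt{n}(A_n-\theta_A) \xrightarrow{\mathcal{D}} X_A$ and $\sqrt{n}(B_n-\theta_B) \xrightarrow{\mathcal{D}} X_B$. The target is to show that $aA_n + bB_n \xrightarrow{\mathcal{P}} a\theta_A + b\theta_B$. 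We also want $\sqrt{n}\bigl(aA_n + bB_n - (a\theta_A+b\theta_B)\bigr)$ to converge in distribution to $aX_A + bX_B$, where $X_A$ and $X_B$ are taken independent. The conclusion then follows from $\Var(aX_A+bX_B) = a^2\sigma_A^2 + b^2\sigma_B^2$, which holds by independence, together with $\E(aX_A+bX_B)=0$.

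The convergence in probability of the linear combination is routine, since sums and scalar multiples of sequences converging in probability converge in probability. The key step is the \emph{joint} convergence of the pair. Write $Z_n = \sqrt{n}\bigl(A_n-\theta_A,\, B_n-\theta_B\bigr)$. I would show that $Z_n \xrightarrow{\mathcal{D}} (X_A, X_B)$ with independent components, using characteristic functions. Because $A_n$ and $B_n$ are independent for each $n$, the joint characteristic function factors as
\begin{align*}
\E e^{i(s Z_{n,1} + t Z_{n,2})} = \E e^{i s Z_{n,1}} \, \E e^{i t Z_{n,2}} \to \phi_{X_A}(s)\,\phi_{X_B}(t).
\end{align*}
The limit is the characteristic function of the product measure of the laws of $X_A$ and $X_B$. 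By L\'evy's continuity theorem, $Z_n$ converges jointly to an independent pair $(X_A, X_B)$. Applying the Continuous Mapping Theorem with the linear map $(u,v)\mapsto au+bv$, as was done for $\seq{c^T A}$ in the proof of Theorem~\ref{theorem:delta_method}, gives convergence in distribution to $aX_A+bX_B$.

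The main obstacle is exactly this joint convergence step. Marginal convergence in distribution alone does not determine the distribution of a sum, so the hypothesis that $A_n$ and $B_n$ are independent must be used at each $n$, through the factorization above. A secondary subtlety is that $\VarDM$ is defined via a particular limit variable $X_A$. Since limits in distribution are unique in law, the variance of the limit is well defined. Any two choices of limit variables have the same laws, so the independent coupling used above is legitimate and $\Var(aX_A+bX_B)$ does not depend on the choice.
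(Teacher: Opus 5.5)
Your proof is correct and follows the same route as the paper: take the limit in distribution of the centered, scaled linear combination via the Continuous Mapping Theorem, then split $\Var(aX_A + bX_B)$ using independence. Your characteristic-function argument for joint convergence to an independent pair $(X_A, X_B)$ supplies the step the paper skips, since the paper applies the Continuous Mapping Theorem to the marginal limits alone and assumes $X_A$ and $X_B$ are independent without justification.
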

\begin{proof}
Since $\VarDM \seq{A}$ and $\VarDM \seq{B}$ both exist, it follows that
\begin{align*}
    A_n &\xrightarrow{\mathcal{D}} X_A \\
    B_n &\xrightarrow{\mathcal{D}} X_B \\
    \therefore \seq{a A+ b B} &\xrightarrow{\mathcal{D}} a X_A + b X_B \tag{by Continuous Mapping Theorem} \\
    \VarDM\seq{A} &= \Var X_A \tag{by definition of $\VarDM\seq{A}$}\\
    \VarDM\seq{B} &= \Var X_B  \tag{by definition of $\VarDM\seq{B}$}\\
    \VarDM\seq{a A + b B} &= \Var (a X_A + b X_B) \tag{by definition of $\VarDM\seq{a A + b B}$} \\
    &= \Var (a X_A) + \Var (b X_B) \tag{by independence} \\
    &= a^2 \Var X_A + b^2 \Var X_B \\
    &= a^2 \VarDM\seq{A} + b^2 \VarDM\seq{B}
\end{align*}
\end{proof}

\subsection{$\VarDM$ for sequences of sample means}
The discussion of $\VarDM$ thus far has assumed $\sqrt{n}$-convergence for sequences without mentioning the origin of such sequences. Of course, such sequences arise naturally when the Central Limit Theorem applies to the mean of independent samples.

\newtheorem{corollary}{Corollary}[theorem]
\begin{theorem}
\label{theorem:xbar}
Let $X$ be a random variable obeying the Central Limit Theorem.
Let $\lambda$ be a positive scalar.
If $\mean{X}_n$ is the sample mean of $\lceil n \lambda \rceil$ independent realizations of $X$ then
\begin{align*}
    \VarDM \seq{\mean{X}} &= \frac{1}{\lambda}\Var X
\end{align*}
\end{theorem}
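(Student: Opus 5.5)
Write $m_n = \lceil n\lambda \rceil$, $\mu = \E X$ and $\sigma^2 = \Var X$. The plan is to check the two conditions in the definition of $\VarDM$ directly and then read off the variance of the limit.

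\textbf{Consistency.} By the weak law of large numbers, $\mean{X}_n \xrightarrow{\mathcal{P}} \mu$, since $m_n \to \infty$ as $n \to \infty$. So $\theta_{\mean{X}} = \mu$.

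\textbf{Rescaling the CLT.} Because $X$ obeys the Central Limit Theorem,
\begin{align*}
    \sqrt{m_n}\,(\mean{X}_n - \mu) \xrightarrow{\mathcal{D}} Z, \qquad Z \sim N(0,\sigma^2).
\end{align*}
The sequence in the definition of $\VarDM$ is scaled by $\sqrt{n}$ rather than $\sqrt{m_n}$, so factor it as
\begin{align*}
    \sqrt{n}\,(\mean{X}_n - \mu) = \sqrt{\frac{n}{m_n}}\;\sqrt{m_n}\,(\mean{X}_n - \mu).
\end{align*}
Since $n\lambda \le m_n < n\lambda + 1$, we have $n/m_n \to 1/\lambda$, hence $\sqrt{n/m_n} \to 1/\sqrt{\lambda}$. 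These are deterministic constants converging to a limit, so Slutsky's theorem gives
\begin{align*}
    \sqrt{n}\,(\mean{X}_n - \mu) \xrightarrow{\mathcal{D}} \frac{1}{\sqrt{\lambda}}\, Z =: X_{\mean{X}}.
\end{align*}
The Continuous Mapping Theorem alone is not enough for this step, because the scaling factor changes with $n$.

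\textbf{Conclusion.} The limit $X_{\mean{X}}$ has $\E X_{\mean{X}} = 0$ and finite variance $\sigma^2/\lambda$. Both conditions of the definition hold, so $\VarDM\seq{\mean{X}} = \frac{1}{\lambda}\Var X$.

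The main obstacle is the mismatch between the index $n$ of the sequence and the sample size $\lceil n\lambda\rceil$. The rescaling step above resolves it. Note that "obeying the CLT" must be read as $\sigma^2 < \infty$, so that the limit has the required finite variance.
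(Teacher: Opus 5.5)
Your proof is correct and takes essentially the same route as the paper: apply the CLT at sample size $\lceil n\lambda\rceil$, rescale by $\sqrt{n/\lceil n\lambda\rceil}\to 1/\sqrt{\lambda}$, and use Slutsky's theorem to get the limiting variance $\Var X/\lambda$. One small addition on your part is that you explicitly check the consistency condition in the definition via the weak law of large numbers, which the paper leaves implicit.
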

\begin{corollary}
\label{corollary:xbar}
If $\mean{X}_n$ is the sample mean of $n$ independent realizations of $X$ then
\begin{align*}
    \VarDM \seq{\mean{X}} &= \Var X
\end{align*}
\end{corollary}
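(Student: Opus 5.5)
The statement immediately above is Corollary~\ref{corollary:xbar}. It is the special case $\lambda = 1$ of Theorem~\ref{theorem:xbar}, since $\lceil n \cdot 1 \rceil = n$ and $1/\lambda = 1$. So I would prove Theorem~\ref{theorem:xbar} directly from the definition of $\VarDM$ and then specialize.

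Write $\mu = \E X$, $\sigma^2 = \Var X$, and $m_n = \lceil n\lambda \rceil$, so that $\mean{X}_n$ is the mean of $m_n$ iid copies of $X$. The proof has three steps.

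\textbf{Step 1 (consistency).} By the weak law of large numbers, $\mean{X}_n \xrightarrow{\mathcal{P}} \mu$, because $m_n \to \infty$. This supplies $\theta_A = \mu$ in the definition.

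\textbf{Step 2 (limit distribution).} By hypothesis $X$ obeys the Central Limit Theorem, so
\begin{align*}
\sqrt{m_n}\,(\mean{X}_n - \mu) \xrightarrow{\mathcal{D}} Z \sim N(0, \sigma^2).
\end{align*}
Now write
\begin{align*}
\sqrt{n}\,(\mean{X}_n - \mu) = \sqrt{n/m_n}\,\sqrt{m_n}\,(\mean{X}_n - \mu).
\end{align*}
Since $n\lambda \le m_n < n\lambda + 1$, we have $n/m_n \to 1/\lambda$ deterministically. Slutsky's theorem (equivalently, the Continuous Mapping Theorem applied to the pair) then gives
\begin{align*}
\sqrt{n}\,(\mean{X}_n - \mu) \xrightarrow{\mathcal{D}} X_A := \lambda^{-1/2} Z.
\end{align*}

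\textbf{Step 3 (identify the variance).} The limit satisfies $\E X_A = 0$ and $\Var X_A = \sigma^2/\lambda$, and this variance exists. By the definition of $\VarDM$, $\VarDM\seq{\mean{X}} = \frac{1}{\lambda}\Var X$. Setting $\lambda = 1$ yields the corollary.

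The only delicate point is the rescaling from $\sqrt{m_n}$ to $\sqrt{n}$ in Step 2. Here the ceiling has to be handled by the squeeze bound above, so that the ratio converges to $1/\lambda$ rather than merely being close to it. In the corollary itself this step is trivial, because $m_n = n$ and the CLT gives the required convergence directly.
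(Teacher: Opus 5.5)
Your proposal is correct and matches the paper's proof: both prove Theorem~\ref{theorem:xbar} for general $\lambda$ by combining the CLT for $\sqrt{\lceil n\lambda\rceil}(\mean{X}_n-\mu)$ with $n/\lceil n\lambda\rceil \to 1/\lambda$ via Slutsky's theorem, and then set $\lambda=1$. Your explicit Step~1 ($\mean{X}_n \to \mu$ in probability) is required by the definition but left implicit in the paper, so it is a small but welcome addition.
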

\begin{proof}
We consider the general case, since the corollary follows from the theorem when $\lambda=1$.
Let $\E X = \mu$ and $\Var X = \Sigma$.
If $\mean{X}_n$ is the sample mean of $\lceil n \lambda \rceil$ independent realizations of $X$ then from the Central Limit Theorem
\begin{align}
    \sqrt{\lceil n \lambda \rceil}(\mean{X} - \mu) &\xrightarrow{\mathcal{D}} U \label{eq:norm_lambda_ceiling}\\
\textrm{where\ } U &\sim \mathcal{N}(0, \Sigma) \notag
\end{align}
Furthermore
\begin{align}
    \frac{n}{\lceil n \lambda \rceil} &= \frac{n}{n \lambda + \epsilon}
        \tag{where $0 \leq \epsilon < 1$}\\
        &= \frac{1}{\lambda + \epsilon/n} \notag \\
    \therefore \lim_{n \to \infty} \sqrt{\frac{n}{\lceil n \lambda \rceil}} 
        &= \frac{1}{\sqrt{\lambda}} \label{eq:limit_lambda_ceiling}\\
    \sqrt{n}(\mean{X} - \mu) &=
        \left( \sqrt{\frac{n}{\lceil n \lambda \rceil}}\right)
        \left( \sqrt{\lceil n \lambda \rceil}(\mean{X} - \mu) \right) \notag \\
    \implies \sqrt{n}(\mean{X} - \mu)
        &\xrightarrow{\mathcal{D}}  \frac{1}{\sqrt{\lambda}} U
        \tag{by \eqref{eq:limit_lambda_ceiling}, \eqref{eq:norm_lambda_ceiling} and
        Slutsky's Theorem}
\end{align}
$\VarDM \seq{\mean{X}}$ is the variance of the limiting distribution of $\sqrt{n}(\mean{X} - \mu)$.
Hence
\begin{align*}
    \VarDM \seq{\mean{X}} &= \Var \left( \frac{1}{\sqrt{\lambda}} U \right) \\
        &= \frac{1}{\lambda}\Sigma \\
        &= \frac{1}{\lambda}\Var X
\end{align*}
\end{proof}
These are all the results involving $\VarDM$ we need to pursue asymptotically optimal variance reduction.

\section{Generalization}
So far, we have demonstrated an optimal (asymptotic) variance reduction strategy for additive treatment effects. We chose what might seem to be a natural adjustment to the statistic measured for each arm of the experiment, namely, $Y - g(X)$. But the range of possibilities for both treatment effect and adjustment is vast. We would like to know that we have the optimal adjustment in each scenario. For this purpose, we need a more elaborate setup.

As before, observations are iid tuples of the form
\[
\left(X, Y(0), Y(1) \right)
\]
but now $Y \in \real^p$. Let $\mu_1 = \E Y(1)$ and $\mu_0 = \E Y(0)$.
Let $\diff:\real^p \times \real^p \to \real$ be a differentiable function that measures the treatment effect relative to control, i.e., the treatment effect. The estimand is defined as
\begin{align}
    \delta &= \diff(\mu_1, \mu_0)
\end{align}

The estimand $\delta$ is estimated using sample means $\mean{Y}_1$ and $\mean{Y}_0$:
\begin{align}
    \metric &= \diff(\mean{Y}_1, \mean{Y}_0)
\end{align}
Under the assumptions for $\diff$, $\metric$ is a consistent estimator of $\delta$:
\begin{align*}
    \metric &= \diff(\mean{Y}_1, \mean{Y}_0)\\
    & \xrightarrow{\mathcal{P}} \diff(\mu_1, \mu_0) \tag{by Continuous Mapping Theorem}\\
    &= \delta
\end{align*}
Variance reduction takes the form of an adjustment to $\diff(\mean{Y}_1, \mean{Y}_0)$ using a function $g(X)$ which is a vector, i.e., $g(X) \in \real^q$. Let $\mean{G}_1$, $\mean{G}_0$ be the average of $g(X)$ over the treatment sample and control sample respectively. Given that $X$ does not depend on assignment to treatment or control, $\E \mean{G}_1 = \E \mean{G}_0 = \mu_g$.

\hfill
\\
Let the adjusted metric $\metricadj$ be given by the function $\diffadj : \real^p \times \real^p \times \real^q \times \real^q \to \real$:
\begin{align*}
    \metricadj &= \diffadj(\mean{Y}_1, \mean{Y}_0, \mean{G}_1, \mean{G}_0)
\end{align*}
A required property of $\diffadj$ is that a common adjustment ``cancels out'':
\begin{align}
    \diffadj(a, b, c, c) &= \diff(a, b)\label{eq:cancel}
\end{align}
This \textbf{cancellation property} guarantees consistency of $\metricadj$.
Our approach is to define the size of $T$ and $C$ in terms of an index $n$
such that
 $|T| = \lceil n \lambda_1\rceil$ and $|C| = \lceil n \lambda_0\rceil$ respectively,
where $\lambda_1, \lambda_0 > 0$.
Thus, each of
$\seq{\mean{Y}_1}$, $\seq{\mean{Y}_0}$, $\seq{\mean{G}_1}$, $\seq{\mean{G}_0}$, 
$\seq{\metric}$, $\seq{\metricadj}$ 
is a sequence indexed by $n$.
Assuming all necessary derivatives exist, the cancellation property ensures that $\metricadj$ is a consistent estimator of $\delta$:
\begin{align*}
    \metricadj &= \diffadj(\mean{Y}_1, \mean{Y}_0, \mean{G}_1, \mean{G}_0)\\
    & \xrightarrow{\mathcal{P}} \diffadj(\mu_1, \mu_0, \mu_g, \mu_g) \tag{Continuous Mapping Theorem}\\
    &= \diff(\mu_1, \mu_0) \tag{by Equation \eqref{eq:cancel}}\\
    &= \delta
\end{align*}
\\
Given $\diff$ and $\diffadj$, our objective is to find $g$ to minimize $\VarDM \seq{\metricadj}$.

\begin{theorem}
\label{theorem:General}
Let $Y(1), Y(0) \in \real^p$ be random variables with expectation $\mu_1$, $\mu_0$ respectively.\\
Let $\diff(a,b)$ be a differentiable function with non-zero derivative at $(\mu_1, \mu_0)$. We use the notation $\diffprimea$ and $\diffprimeb$ to denote the partial derivatives of $\diff$ with respect to its first and second argument. $\diffprimea(a,b)$, $\diffprimeb(a,b) \in \real^p$.\\
Let $X$ be a random variable and $g(x) \in \real^q$ be a function defined over its domain such that
$\E g(X) = \mu_g$.\\
Let $\diffadj(a, b, c, d)$ be a differentiable function such that for all $a, b, c$
\begin{align*}
    \diffadj(a, b, c, c) &= \diff(a, b)
\end{align*}
and that $\diffadj$ has non-zero partial derivatives at $(\mu_1, \mu_0, \mu_g, \mu_g)$. We denote partial derivatives of $\diffadj$ with respect to its four arguments as
$\diffadjprimea$, $\diffadjprimeb$, $\diffadjprimec$ and $\diffadjprimed$.
These derivatives are vectors in $\real^p$, $\real^p$, $\real^q$ and $\real^q$ respectively.

\hfill
\\
Let $T$ and $C$ be sets of iid samples of $(X, Y(1))$ and $(X, Y(0))$.\\
Let $\mean{Y}_1$, $\mean{G}_1$ be the sample averages of $Y(1), g(X)$ in $T$ \\
and $\mean{Y}_0$, $\mean{G}_0$ be the sample averages of $Y(0), g(X)$ in $C$.

\hfill
\\
Define
\begin{align}
    \metricadj &= \diffadj(\mean{Y}_1, \mean{Y}_0, \mean{G}_1, \mean{G}_0) \label{eq:def_Madj}
\end{align}
Let the size of $T$, $C$ be $\lceil n \lambda_1\rceil$ and $\lceil n \lambda_0\rceil$ respectively,
where $\lambda_1,\lambda_0 > 0$. \\
$\seq{\mean{Y}_1}$, $\seq{\mean{Y}_0}$, $\seq{\mean{G}_1}$, $\seq{\mean{G}_0}$, $\seq{\metricadj}$ are
sequences induced by $n$.

\hfill
\\
Then $\VarDM\seq{\metricadj}$ exists and is minimized over choice of $g = g^*$
\begin{align*}
v_g ^T g^*(x)  &= (1 - \gamma) v_a ^T \E[Y(1)|X=x] 
                - \gamma v_b ^T \E[Y(0)|X=x]
\end{align*}
where $\gamma = \lambda_1/(\lambda_1 + \lambda_0)$ and
\begin{align}
v_a &= \diffprimea(\mu_1, \mu_0) \label{eq:def_va}\\
v_b &= \diffprimeb(\mu_1, \mu_0) \label{eq:def_vb}\\
v_g &= \diffadjprimed(\mu_1, \mu_0, \mu_g, \mu_g) \label{eq:def_vg}
\end{align}
Furthermore, the minimum can be achieved for scalar $g$ (i.e., $q=1$).
Finally, the use of scalar $g \neq g^*$ results in additional variance
$\Var(v_g (g(X) - g^*(X))$.
\end{theorem}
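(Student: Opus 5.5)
The plan is to linearize $\metricadj$ with Theorem~\ref{theorem:delta_method}, which turns the problem into the variance of a difference of sample means. After that the optimization is exactly Lemma~\ref{lemma:optimize}.

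\textbf{Step 1: consequences of the cancellation property.} Differentiate the identity $\diffadj(a,b,c,c)=\diff(a,b)$, which holds for all $a,b,c$, at the point $(\mu_1,\mu_0,\mu_g)$. This gives
\[
\diffadjprimea=v_a,\qquad \diffadjprimeb=v_b,\qquad \diffadjprimec+\diffadjprimed=0,
\]
all evaluated at $(\mu_1,\mu_0,\mu_g,\mu_g)$. Hence $\diffadjprimec=-v_g$. So the gradient of $\diffadj$ at the limit point is $(v_a,v_b,-v_g,v_g)$, and only $v_g$ is a genuinely new quantity.

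\textbf{Step 2: linearization.} Stack $A_n=(\mean{Y}_1,\mean{Y}_0,\mean{G}_1,\mean{G}_0)$. This vector converges in probability to $(\mu_1,\mu_0,\mu_g,\mu_g)$. For its $\VarDM$ I would not treat the four blocks separately, because $\mean{Y}_1$ and $\mean{G}_1$ are correlated. Instead I would apply Theorem~\ref{theorem:xbar} to the within-arm vectors $Z_1=(Y(1),g(X))$ and $Z_0=(Y(0),g(X))$.
\begin{itemize}
\item The treatment mean $\mean{Z}_1$ uses $\lceil n\lambda_1\rceil$ draws and the control mean $\mean{Z}_0$ uses $\lceil n\lambda_0\rceil$ draws, so Theorem~\ref{theorem:xbar} gives the $\sqrt n$ limits of each block.
\item The two arms are independent, so their joint limit is a product of two Gaussians and $\VarDM\seq{A}$ exists and is block diagonal by arm. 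This needs finite second moments of $Y(\cdot)$ and $g(X)$, which I would state as a standing assumption.
\end{itemize}
Theorem~\ref{theorem:delta_method} then gives
\[
\VarDM\seq{\metricadj}=\VarDM\seq{v_a^T\mean{Y}_1-v_g^T\mean{G}_1+v_b^T\mean{Y}_0+v_g^T\mean{G}_0}.
\]
Each of the two grouped pieces (treatment terms and control terms) is a sample mean of a scalar random variable. Lemma~\ref{lemma:vardm_sum} splits the variance across arms, and Theorem~\ref{theorem:xbar} evaluates each arm, giving
\[
\VarDM\seq{\metricadj}=\frac{1}{\lambda_1}\Var\!\left(v_a^TY(1)-v_g^Tg(X)\right)+\frac{1}{\lambda_0}\Var\!\left(-v_b^TY(0)-v_g^Tg(X)\right).
\]
This also proves existence.

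\textbf{Step 3: optimization.} Set $Y_1=v_a^TY(1)$, $Y_0=-v_b^TY(0)$, $f(X)=v_g^Tg(X)$, $w_1=\lambda_1$ and $w_0=\lambda_0$. The objective is then exactly the functional $J(f)$ of Lemma~\ref{lemma:optimize}, with $\eta=\gamma$. The lemma gives the minimizer
\[
v_g^Tg^*(x)=(1-\gamma)v_a^T\E[Y(1)|X=x]-\gamma v_b^T\E[Y(0)|X=x].
\]
It also gives the excess over the minimum as $\left(\tfrac1{\lambda_1}+\tfrac1{\lambda_0}\right)\Var\!\left(v_g(g(X)-g^*(X))\right)$. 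So the excess-variance claim in the statement holds up to this positive constant.

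The objective depends on $g$ only through the scalar $v_g^Tg$, and any scalar function of $X$ can be realized as $v_g^Tg$ provided $v_g\neq 0$. In particular it can be realized with $q=1$ by setting $g=f^*/v_g$. This gives the scalar sufficiency claim.

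\textbf{Main obstacle.} I expect the only delicate point to be Step 2. Correlations inside an arm must be carried by the vector CLT rather than by Lemma~\ref{lemma:vardm_sum}, which only covers independent pieces. The paper's Theorem~\ref{theorem:delta_method} also requires a non-vanishing gradient. I would check that $v_g\neq0$ follows from the assumed non-zero partial derivatives of $\diffadj$. If $v_g=0$, the adjustment has no first-order effect, so the minimization is vacuous.
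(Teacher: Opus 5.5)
Your proposal is essentially the paper's proof: the gradient $(v_a,v_b,-v_g,v_g)$ from the cancellation property, the Delta Method via Theorem~\ref{theorem:delta_method}, splitting by arm with Lemma~\ref{lemma:vardm_sum} and Theorem~\ref{theorem:xbar}, then Lemma~\ref{lemma:optimize}; your within-arm vector CLT fills in a detail the paper leaves implicit, and you are right that the excess variance carries the factor $\left(\frac{1}{\lambda_1}+\frac{1}{\lambda_0}\right)$, as in the paper's Equation~\eqref{eq:g_suboptim}. One caution, which the paper's proof shares: $v_g=\diffadjprimed(\mu_1,\mu_0,\mu_g,\mu_g)$ depends on $\mu_g=\E g(X)$, so ``set $g=f^*/v_g$'' is a fixed-point condition that need not be exactly solvable (cf.\ Equation~\eqref{eq:constraint_optim_g}), but the additive-constant freedom in Lemma~\ref{lemma:optimize} repairs this: pick $m$ with $\diffadjprimed(\mu_1,\mu_0,m,m)\neq 0$ and take $g=\bigl(f^*-\E f^*(X)\bigr)/\diffadjprimed(\mu_1,\mu_0,m,m)+m$, so that $\E g(X)=m$ and $v_g g$ differs from $f^*$ only by a constant.
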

\begin{proof}
From these definitions and assumptions, we may infer the following:
\begin{align}
\frac{d }{d c} \diffadj(a, b, c, c) &=
    \diffadjprimec(a, b, c, c) + \diffadjprimed(a, b, c, c) \notag\\
\frac{d }{d c} \diffadj(a, b, c, c) &= \frac{d }{d c} \diff(a, b) 
\tag{by Equation \eqref{eq:cancel}} = 0 \notag\\
\implies \diffadjprimec(a, b, c, c) &= -\diffadjprimed(a, b, c, c) \label{eq:antisym}
\end{align}
Also
\begin{align}
\left. \frac{d }{d u} \diffadj(u, b, c, c)) \right|_{u=a}&=
    \diffadjprimea(a, b, c, c) \notag\\
\left. \frac{d }{d u} \diffadj(u, b, c, c)) \right|_{u=a}&=
\left. \frac{d }{d u} \diff(u, b) \right|_{u=a}\notag = \diffprimea(a, b) \notag \\
\implies \diffadjprimea(a, b, c, c) &= \diffprimea(a, b) \label{eq:diffprimea}
\end{align}
Likewise from the derivative of $\diffadj(a, u, c, c)$ w.r.t $u$ at $u=b$
\begin{align}
\diffadjprimeb(a, b, c, c) &= \diffprimeb(a, b) \label{eq:diffprimeb}
\end{align}
Collecting together these results for the derivative of $\diffadj$ we have
\begin{align*}
    \diffadj'(\mu_1, \mu_0, \mu_g, \mu_g) &= 
    \begin{bmatrix}
    \diffadjprimea(\mu_1, \mu_0, \mu_g, \mu_g) \\
    \diffadjprimeb(\mu_1, \mu_0, \mu_g, \mu_g) \\
    \diffadjprimec(\mu_1, \mu_0, \mu_g, \mu_g) \\
    \diffadjprimed(\mu_1, \mu_0, \mu_g, \mu_g)
    \end{bmatrix} \\
    &= 
    \fourcolvec{\diffprimea(a, b)}{\diffprimeb(a, b)}{-v_g}{v_g}
    \tag{by Equations \eqref{eq:def_vg}, \eqref{eq:antisym}, \eqref{eq:diffprimea}, \eqref{eq:diffprimeb}}\\
    &= 
    \fourcolvec{v_a}{v_b}{-v_g}{v_g}
   \tag{by Equations \eqref{eq:def_va}, \eqref{eq:def_vb} }\\
\end{align*}

Since the derivative of $\diffadj$ at $(\mu_1, \mu_0, \mu_g, \mu_g)$ is non-zero by assumption
\begin{align*}
    \VarDM \seq{\metricadj} &= 
    \VarDM \seq{\diffadj(\mean{Y}_1, \mean{Y}_0, \mean{G}_1, \mean{G}_0)}
    \tag{by Equation \eqref{eq:def_Madj}} \\
    &= \VarDM \seq{
        \diffadj'(\mu_1, \mu_0, \mu_g, \mu_g)^T 
        \fourcolvec{\mean{Y}_1}{\mean{Y}_0}{\mean{G}_1}{\mean{G}_0} 
        }
        \tag{by Theorem \eqref{theorem:delta_method}} \\
    &= \VarDM \seq{
                \fourrowvec{v_a}{v_b}{-v_g}{v_g}  
                \fourcolvec{\mean{Y}_1}{\mean{Y}_0}{\mean{G}_1}{\mean{G}_0} 
            } \\
    &= \VarDM \seq{ v_a ^T \mean{Y}_1 + v_b ^T \mean{Y}_0 - v_g ^T \mean{G}_1 + v_g ^T \mean{G}_0 }
\end{align*}
Since $\mean{Y}_1, \mean{G}_1$ (treatment) are independent of $\mean{Y}_0, \mean{G}_0$ (control), it follows that
\begin{align}
\VarDM \seq{\metricadj}
&= \VarDM \seq{ v_a ^T \mean{Y}_1 + v_b ^T \mean{Y}_0 - v_g ^T \mean{G}_1 + v_g ^T \mean{G}_0 }
    \notag \\
    &= \VarDM \seq{ v_a ^T \mean{Y}_1 - v_g ^T \mean{G}_1 }
    + \VarDM \seq{ v_b ^T \mean{Y}_0 + v_g ^T \mean{G}_0 }
        \tag{by Lemma \eqref{lemma:vardm_sum}} \\
    &= \frac{1}{\lambda_1} \Var \left(v_a ^T Y(1) - v_g ^T g(X) \right) + 
       \frac{1}{\lambda_0} \Var\left( v_b ^T Y(0) + v_g ^T g(X) \right)
    \tag{by Theorem \eqref{theorem:xbar}} \\
    &= \frac{1}{\lambda_1} \Var \left(v_a ^T Y(1) - v_g ^T g(X) \right) + 
       \frac{1}{\lambda_0} \Var \left(-v_b ^T Y(0) - v_g ^T g(X) \right) 
    \label{eq:vardm_optim}
\end{align}
Each of $v_g ^T g(x)$, $v_a ^T Y(1)$ and $v_b ^T Y(0)$ are scalar.
Thus, we apply Lemma \eqref{lemma:optimize} with substitutions
$f(x) \rightarrow v_g^T g(x)$, $Y_1 \rightarrow v_a^T Y(1)$, $Y_0 \rightarrow -v_b^T Y(0)$
to assert that $\min_g \VarDM \seq{\metricadj}$ occurs when
\begin{align}
    v_g ^T g(x) &= (1-\gamma) v_a ^T \E [Y(1)|X=x] 
                        - \gamma v_b ^T \E [Y(0)|X=x]
    \label{eq:g_optim}
\end{align}
Furthermore,
$v_a = \diffprimea(\mu_1, \mu_0)$ and $v_b = \diffprimeb(\mu_1, \mu_0)$ do not depend on the form of the adjusted metric $\diffadj$. Only $v_g$ does, being the marginal change in $\diffadj$:
\begin{align*}
    v_g &= \diffadjprimed(\mu_1, \mu_0, \mu_g, \mu_g)\\
    &= \left. \frac{d}{d u} \diffadj(\mu_1, \mu_0, \mu_g, u) ) \right|_{u=\mu_g}
\end{align*}
Equation \eqref{eq:vardm_optim} shows that $\VarDM \seq{\metricadj}$ depends on $g$ only through
$v_g ^T g(x)$.
By assumption, 
$g(x) \in \real^q$, 
but variance is minimum as long as $v_g ^T g(x)$ has the optimal value.
Thus, we may choose $g(x)$ to be scalar leading to scalar $v_g$.
Finally, Lemma \eqref{lemma:optimize} also tells us the penalty for using
$\widehat{g}$ instead of $g$:
\begin{align}
\VarDM \seq{\metricadj} &= \VarDM \seq{\metricadj^*} + 
\left(\frac{1}{\lambda_1} + \frac{1}{\lambda_0} \right)
\Var \left(v_g (\widehat{g}(X) - g(X)) \right) \label{eq:g_suboptim}
\end{align}
\end{proof}

Next is a theorem to extend first order optimal $g$ to related metrics.
\begin{theorem}
\label{theorem:optimal_invariance}
Let the quantities and conditions be as defined in Theorem \eqref{theorem:General}.\\
Let $t$ be any differentiable scalar function such that
$t'(\delta) \neq 0$ for $\delta = f(\mu_1, \mu_0) = \diffadj(\mu_1, \mu_0, \mu_g, \mu_g)$.
Let
\begin{align*}
    \tmetricadj &= t(\metricadj)
\end{align*}
If $g$ minimizes $\VarDM\seq{\metricadj}$ then it also minimizes $\VarDM\seq{\tmetricadj}$.
\end{theorem}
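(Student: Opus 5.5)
The plan is to apply the Delta Method property, Theorem \eqref{theorem:delta_method}, to the scalar sequence $\seq{\metricadj}$ with the outer function $t$. This shows that $\VarDM\seq{\tmetricadj}$ is the constant $t'(\delta)^2$ times $\VarDM\seq{\metricadj}$, with a constant that does not depend on $g$. Minimization over $g$ is then unaffected.

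\textbf{Step 1: existence and limit.} By Theorem \eqref{theorem:General}, for any admissible $g$ the quantity $\VarDM\seq{\metricadj}$ exists. By the consistency argument preceding that theorem, $\metricadj \xrightarrow{\mathcal{P}} \diffadj(\mu_1,\mu_0,\mu_g,\mu_g) = \diff(\mu_1,\mu_0) = \delta$, using the cancellation property \eqref{eq:cancel}. The important point is that the limit $\theta = \delta$ is the same for every choice of $g$. Consequently the derivative $t'(\delta)$ that appears below is independent of $g$.

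\textbf{Step 2: apply Theorem \eqref{theorem:delta_method}.} Take $h = t$, which is scalar and differentiable with $t'(\delta)\neq 0$. Theorem \eqref{theorem:delta_method} gives
\begin{align*}
\VarDM\seq{\tmetricadj} = \VarDM\seq{t'(\delta)\,\metricadj}.
\end{align*}
By equation \eqref{eq:varDM_cA} with $c = t'(\delta)$, this equals $t'(\delta)^2\,\VarDM\seq{\metricadj}$.

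\textbf{Step 3: conclude.} Since $t'(\delta)^2 > 0$ is a constant not depending on $g$, the map $g \mapsto \VarDM\seq{\tmetricadj}$ is a positive multiple of $g \mapsto \VarDM\seq{\metricadj}$. Hence any minimizer of the latter is a minimizer of the former; indeed the two have the same set of minimizers. As a corollary, the excess-variance formula \eqref{eq:g_suboptim} carries over after multiplication by $t'(\delta)^2$.

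\textbf{Main obstacle.} The only delicate point is Step 1: we must verify that the centering constant, and hence $t'(\cdot)$, is evaluated at the same $\delta$ regardless of $g$. This follows from the cancellation property \eqref{eq:cancel}, which forces every adjusted estimator to converge to $\diff(\mu_1,\mu_0)$. A minor technicality is that $\tmetricadj$ can also be viewed as $\tilde\diffadj = t\circ \diffadj$. This composite satisfies the cancellation property with $\tilde \diff = t\circ\diff$, so one could alternatively apply Theorem \eqref{theorem:General} directly. In that approach $v_a$, $v_b$ and $v_g$ are all scaled by $t'(\delta)$, and the optimality condition \eqref{eq:g_optim} is left unchanged. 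This provides a consistency check on the main argument.
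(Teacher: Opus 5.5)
Your proof is correct and is essentially the paper's argument: both show $\VarDM\seq{\tmetricadj} = t'(\delta)^2\,\VarDM\seq{\metricadj}$ using Theorem~\ref{theorem:delta_method} together with the scaling of $\VarDM$ by a constant, where $t'(\delta)$ does not depend on $g$. The only difference is cosmetic. The paper applies the delta method to the composite $t\circ\diffadj$ of the stacked sample means and extracts $t'(\delta)^2$ via Lemma~\ref{lemma:vardm_sum}, whereas you apply it once to the scalar sequence $\seq{\metricadj}$ and use \eqref{eq:varDM_cA}, and you also make explicit the $g$-independence of $\delta$ and the final step about minimizers, which the paper leaves implicit.
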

\begin{proof}
\begin{align*}
    \VarDM \seq{\metricadj} &= 
    \VarDM \seq{\diffadj(\mean{Y}_1, \mean{Y}_0, \mean{G}_1, \mean{G}_0)}
    \tag{by Equation \eqref{eq:def_Madj}} \\
    &= \VarDM \seq{
        \diffadj'(\mu_1, \mu_0, \mu_g, \mu_g)^T \mean{B}
        }
        \tag{by Theorem \eqref{theorem:delta_method}}
\end{align*}
where
$\mean{B} = \fourrowvec{\mean{Y}_1}{\mean{Y}_0}{\mean{G}_1}{\mean{G}_0}^T$.
Similarly
\begin{align*}
    \VarDM \seq{\tmetricadj} &= 
    \VarDM \seq{t(\diffadj(\mean{Y}_1, \mean{Y}_0, \mean{G}_1, \mean{G}_0))} \\
    &= \VarDM \seq{
        t'(\diffadj(\mu_1, \mu_0, \mu_g, \mu_g))
        \diffadj'(\mu_1, \mu_0, \mu_g, \mu_g)^T \mean{B}} \\
    &= \VarDM \seq{
        t'(\delta)
        \diffadj'(\mu_1, \mu_0, \mu_g, \mu_g)^T \mean{B}} \\
    &= t'(\delta)^2
        \VarDM \seq{
        \diffadj'(\mu_1, \mu_0, \mu_g, \mu_g)^T \mean{B}} 
        \tag{by Lemma \eqref{lemma:vardm_sum}} \\
    &= t'(\delta)^2 \VarDM \seq{\metricadj}
\end{align*}
\end{proof}

\subsection{Constraints on $g$ and $\diffadj$}
Theorem \ref{theorem:General} provides a condition for optimal $g$. Nevertheless, there are decisions to be made because
\begin{itemize}
    \item there is often considerable choice in how $\diffadj$ is defined to use $\mean{G}_1, \mean{G}_0$ while still respecting the cancellation property. For instance, $\diffadj$ may be an additive function of $\mean{G}_1 - \mean{G}_0$ or of $r(\mean{G}_1) - r(\mean{G}_0)$ or even $s(\mean{G}_1 - \mean{G}_0)$ where $r$ and $s$ are differentiable and $s(0)=0$.
    \item the stated optimality is defined to first order. Thus all functions that have the same first order Taylor series expansion near their mean are equivalent for this purpose.
\end{itemize}
The treatment presented does not suggest a specific choice. Unless there are clear reasons of model fit to choose a particular form, simpler is usually better.

However, there is one constraint that can eliminate certain choices of $\diffadj$ and $g$. Taking expectations on both sides of Equation \eqref{eq:g_optim} leads to
\begin{align}
    v_g ^T \mu_g &= (1-\gamma) v_a ^T \E Y(1) - \gamma v_b ^T \E Y(0)
    \label{eq:constraint_optim_g}
\end{align}
where $v_g$ depends on $\mu_g$ and $\diffadj$ while $v_a$ and $v_b$ do not. For certain choices of $\diffadj$, there may be no choice of $g$ that can satisfy Equation \eqref{eq:constraint_optim_g}.

It is worth noting that there is always a solution if $h$ is defined simply as an additive adjustment to $f$:
\begin{align}
h(\mean{Y}_1, \mean{Y}_0, \mean{G}_1, \mean{G}_0)
&= f(\mean{Y}_1, \mean{Y}_0) - \mean{G}_1 + \mean{G}_0 \label{eq:simple_h}\\
\Rightarrow \metricadj &= \metric - \mean{G}_1 + \mean{G}_0\notag
\end{align}
In this case $v_g = \diffadjprimed(\mu_1, \mu_0, \mu_g, \mu_g) = 1$, and by Theorem \eqref{theorem:General}:

\begin{align}
g(x) &= (1-\gamma) v_a^T \E[Y(1)|X=x] -\gamma v_b^T\E[Y(0)|X=x]  \label{eq:simple_optim}
\end{align}

\section{Applications}
In this section, we apply the general result on asymptotic optimality to specific situations of interest.

\subsection{Additive treatment effects}
\label{section:addtive}
We start with the case of additive treatment effects for which derived optimal non-asymptotic results in Equation \eqref{eq:g_diff}.
In this setting, $Y$ is scalar and the treatment effect is defined by the difference $\E Y(1) - \E Y(0)$. Thus
\begin{align*}
    \delta &= f(\mu_1, \mu_0) = \mu_1 - \mu_0 \\
    \metric &= f(\mean{Y}_1, \mean{Y}_0) = \mean{Y}_1 - \mean{Y}_0 \\
    \metricadj &= h(\mean{Y}_1, \mean{Y}_0, \mean{G}_1, \mean{G}_0)
    = (\mean{Y}_1 - \mean{G}_1) - (\mean{Y}_0 - \mean{G}_0)
\end{align*}
As required $\metricadj = \metric$ when $\mean{G}_1=\mean{G}_0$. Thus
\begin{align*}
v_a &= \diffprimea(\mu_1, \mu_0) = 1 \tag{by Equation \eqref{eq:def_va}}\\
v_b &= \diffprimeb(\mu_1, \mu_0) = -1 \tag{by Equation \eqref{eq:def_vb}}\\
v_g &= \diffadjprimed(\mu_1, \mu_0, \mu_g, \mu_g) = 1 \tag{by Equation \eqref{eq:def_vg}} \\
\therefore g(x) &= (1-\gamma) \E[Y(1)|X=x] + \gamma \E[Y(0)|X=x] \tag{by Theorem \eqref{theorem:General}}
\end{align*}
This is identical to Equation \eqref{eq:g_diff}. In other words, our asymptotically optimal solution is (non-asymptotically) optimal in this simple common case.

\subsection{Multiplicative treatment effect: log scale}
\label{section:mult_simple}
Much of the literature on variance reduction, indeed experiment analysis and causal inference itself, posits additive treatment effects. However,  multiplicative effects are often of greater practical interest in domains such as large scale online services.

In the simplest case, $Y$ is scalar and the effect is defined on a log scale:
\begin{align*}
    \delta &= \log \mu_1 - \log \mu_0
\end{align*}
This naturally leads to a plug-in estimator
\begin{align*}
    f(\mean{Y}_1, \mean{Y}_0) 
    &= \log \mean{Y}_1 - \log \mean{Y}_0 \\
    &= \metric
\end{align*}
where $\mean{Y}_1$, $\mean{Y}_0$ are the sample mean of $Y$ in treatment and control respectively.
Using the vector of covariates $X$ and function $g(X)$ we define an adjusted estimator
\begin{align}
\metricadj 
    &= (\log \mean{Y}_1 - \mean{G}_1) - (\log \mean{Y}_0 - \mean{G}_0)
    \label{eq:def_Madj_mult_bis}
\end{align}
where $\mean{G}_1$, $\mean{G}_0$ are the sample mean of $g(X)$ in treatment and control respectively.
This is an instance of Equation \eqref{eq:simple_h}, where
\begin{align*}
v_a &= \diffprimea(\mu_1, \mu_0) = \frac{1}{\mu_1} \tag{by Equation \eqref{eq:def_va}}\\
v_b &= \diffprimeb(\mu_1, \mu_0) = -\frac{1}{\mu_0} \tag{by Equation \eqref{eq:def_vb}}
\end{align*}
and hence
\begin{align}
    g(x) &= (1-\gamma) \frac{\E[Y(1)|X=x]}{\E Y(1)} +\gamma \frac{\E[Y(0)|X=x]}{\E Y(0)}
    \tag{by Equation \eqref{eq:simple_optim}}
\end{align}
where $\E Y(1)$, $\E Y(0)$ are $\mu_1$, $\mu_0$.

Alternatively, we could use a different adjusted estimator
\begin{align}
    h(\mean{Y}_1, \mean{Y}_0, \mean{G}_1, \mean{G}_0)
    &= (\log \mean{Y}_1 - \log \mean{G}_1) - (\log \mean{Y}_0 - \log \mean{G}_0) \notag \\
    \metricadj 
    &= (\log \mean{Y}_1 - \log \mean{G}_1) - (\log \mean{Y}_0 - \log \mean{G}_0)
    \label{eq:def_Madj_mult}
\end{align}
Notice the log in front of $\mean{G}_1$ and $\mean{G}_0$. As required $\diffadj$ has the cancellation property (Equation \eqref{eq:cancel}) such that $\metricadj = \metric$ when $\mean{G}_1 = \mean{G}_0$.
This time
\begin{align*}
v_a &= \diffprimea(\mu_1, \mu_0) = \frac{1}{\mu_1} \tag{by Equation \eqref{eq:def_va}}\\
v_b &= \diffprimeb(\mu_1, \mu_0) = -\frac{1}{\mu_0} \tag{by Equation \eqref{eq:def_vb}}\\
v_g &= \diffadjprimed(\mu_1, \mu_0, \mu_g, \mu_g) = \frac{1}{\mu_g} \tag{by Equation \eqref{eq:def_vg}} \\
\frac{g(x)}{\mu_g} &= 
        (1-\gamma) \frac{\E[Y(1)|X=x]}{\mu_1} + \gamma \frac{\E[Y(0)|X=x]}{\mu_0}
        \tag{by Theorem \eqref{theorem:General}}
\end{align*}
Since $\mu_g = \E g(X)$, optimal $g$ is defined up to a multiplicative constant --- a fact also evident from
 Equation \eqref{eq:def_Madj_mult}. For simplicity, we choose
\begin{align}
g(x) &= 
      (1-\gamma) \frac{\E[Y(1)|X=x]}{\E Y(1)} + \gamma \frac{\E[Y(0)|X=x]}{\E Y(0)}  \label{eq:best_g_logs}
\end{align}
where $\E Y(1)$, $\E Y(0)$ are $\mu_1$, $\mu_0$, since for this choice, $\mu_g = 1$.

As an aside, it may seem odd that Equation \eqref{eq:def_Madj_mult} and \eqref{eq:def_Madj_mult_bis} are both asymptotically optimal until we note that $\mu_g = 1$, and hence to first order $\log \mean{G}_1 \approx \mean{G}_1 - 1$ (likewise for $\mean{G}_0$).

\subsection{Multiplicative treatment effect: ratio}
A multiplicative treatment effect can also be posed as a ratio rather than on a log scale. This may be developed as follows:
\begin{align*}
    \delta &= \frac{\mu_1}{\mu_0} - 1 \tag{subtracting 1 so that 0 represents no effect} \\
    f(\mean{Y}_1, \mean{Y}_0) &= \frac{\mean{Y}_1}{\mean{Y}_0} - 1 = \metric\\
    h(\mean{Y}_1, \mean{Y}_0, \mean{G}_1, \mean{G}_0)
    &= \frac{\mean{Y}_1}{\mean{G}_1}/\frac{\mean{Y}_0}{\mean{G}_0} - 1 = \metricadj \\
v_a &= \diffprimea(\mu_1, \mu_0) = \frac{1}{\mu_0} \\
v_b &= \diffprimeb(\mu_1, \mu_0) = -\frac{\mu_1}{\mu_0^2} \\
v_g &= \diffadjprimed(\mu_1, \mu_0, \mu_g, \mu_g) = \frac{\mu_1}{\mu_0 \mu_g}  \\
\frac{\mu_1}{\mu_0 \mu_g}  g(x) 
&= (1-\gamma) \frac{1}{\mu_0} \E[Y(1)|X=x] 
    + \gamma \frac{\mu_1}{\mu_0^2} \E[Y(0)|X=x] \\
\frac{g(x)}{\mu_g} &= 
        (1-\gamma) \frac{\E[Y(1)|X=x]}{\E Y(1)} + \gamma \frac{\E[Y(0)|X=x]}{\E Y(0)}
\end{align*}
where $\E Y(1)$, $\E Y(0)$ are $\mu_1$, $\mu_0$.
This is the same equation that led to the optimal solution for the log scale treatment effects
(Equation \eqref{eq:best_g_logs}). We could have anticipated this by noticing that Theorem \eqref{theorem:optimal_invariance} applies since $\metricadj$ in the two cases are related by a transform.

\subsection{Multiplicative treatment for ratio metrics}
\label{section:ratio}
We now consider multiplicative treatment effects that apply to ratios of sample means.
This is very common in industry where many metrics of interest are ratios. For instance, the average price per item sold or click through rate are examples of ratios of means. In this setting, $Y$ is replaced by a $2$-vector
$\twocolvec{Y}{Z}$. The metric of interest is $\log \E Y - \log \E Z$, and the treatment effect defined as
\begin{align*}
    \delta &= \left(\log \mu_1 - \log \nu_1 \right) - \left(\log \mu_0 - \log \nu_0 \right)
\end{align*}
\begin{align*}
    \textrm{where\ }\twocolvec{\mu_1}{\nu_1} &= \E \twocolvec{Y(1)}{Z(1)} \\
    \textrm{and\ }  \twocolvec{\mu_0}{\nu_0} &= \E \twocolvec{Y(0)}{Z(0)}
\end{align*}

Applying a simple additive adjustment as in Equation \eqref{eq:simple_h}:
\begin{align}
\metric &= f(\twocolvec{\mean{Y}_1}{\mean{Z}_1}, \twocolvec{\mean{Y}_0}{\mean{Z}_0}) 
    = (\log \mean{Y}_1 - \log \mean{Z}_1) - (\log \mean{Y}_0 - \log \mean{Z}_0)  \notag\\
\metricadj &= h(\twocolvec{\mean{Y}_1}{\mean{Z}_1}, \twocolvec{\mean{Y}_0}{\mean{Z}_0}, \mean{G}_1, \mean{G}_0 ) 
   = (\log \mean{Y}_1 - \log \mean{Z}_1 - \mean{G}_1)
   - (\log \mean{Y}_0 - \log \mean{Z}_0 - \mean{G}_0) \label{eq:ratio_h}\\
v_a &= \diffprimea(\twocolvec{\mu_1}{\nu_1}, \twocolvec{\mu_0}{\nu_0}) 
    = \tworowvec{\frac{1}{\mu_1}}{-\frac{1}{\nu_1}} \notag \\
v_b &= \diffprimeb(\twocolvec{\mu_1}{\nu_1}, \twocolvec{\mu_0}{\nu_0})
    = \tworowvec{-\frac{1}{\mu_0}}{\frac{1}{\nu_0}} \notag 
\end{align}
From Equation \eqref{eq:simple_optim} we have
\begin{align}
    g(x)
    &= (1-\gamma) \E \left[\frac{Y(1)}{\E Y(1)} - \frac{Z(1)}{\E Z(1)} \bigg| X=x \right] 
    +  \gamma \E \left[\frac{Y(0)}{\E Y(0)} - \frac{Z(0)}{\E Z(0)} \bigg| X=x \right] 
    \label{eq:ratio_ratio}
\end{align}
We note in passing that Theorem \eqref{theorem:optimal_invariance} extends the result to treatment effects defined as ratios of ratios rather than as log differences:
\begin{align*}
    \metric &= \frac{\mean{Y}_1/\mean{Z}_1}{\mean{Y}_0/\mean{Z}_0} \\
    \metricadj &= \frac{\mean{Y}_1/\mean{Z}_1/\mathrm{e}^{\mean{G}_1}}
                    {\mean{Y}_0/\mean{Z}_0/\mathrm{e}^{\mean{G}_0}}
\end{align*}

It is worth reflecting on the more general implications of what this example shows. Each observation $\twocolvec{y_i}{z_i}$ is a $2$-vector. 
For optimal variance reduction, we may choose a scalar function $g(x)$. If we are to fit $g(x)$ from data, Equation \eqref{eq:ratio_ratio} tells us what the prediction target of that regression should be. Indeed, regardless of the dimensionality of the response, we can achieve asymptotically optimal variance reduction from a scalar prediction.
This is a big simplification, and generalizes to other complex metrics involving multiple responses.

Finally, consider a failure case, one in which Equation \eqref{eq:constraint_optim_g} is not satisfied.
Let
\begin{align*}
f(\twocolvec{\mean{Y}_1}{\mean{Z}_1}, \twocolvec{\mean{Y}_0}{\mean{Z}_0}) 
    &= (\log \mean{Y}_1 - \log \mean{Z}_1) - (\log \mean{Y}_0 - \log \mean{Z}_0) = \metric \\
h(\twocolvec{\mean{Y}_1}{\mean{Z}_1}, \twocolvec{\mean{Y}_0}{\mean{Z}_0}, \mean{G}_1, \mean{G}_0 ) 
    &= (\log \mean{Y}_1 - \log \mean{Z}_1 - \log \mean{G}_1)  - (\log \mean{Y}_0 - \log \mean{Z}_0 - \log \mean{G}_0) 
    = \metricadj \\
v_a &= \diffprimea(\twocolvec{\mu_1}{\nu_1}, \twocolvec{\mu_0}{\nu_0}) 
    = \tworowvec{\frac{1}{\mu_1}}{-\frac{1}{\nu_1}} \\
v_b &= \diffprimeb(\twocolvec{\mu_1}{\nu_1}, \twocolvec{\mu_0}{\nu_0})
    = \tworowvec{-\frac{1}{\mu_0}}{\frac{1}{\nu_0}} \\
v_g &= \diffadjprimed(\twocolvec{\mu_1}{\nu_1}, \twocolvec{\mu_0}{\nu_0}, \mu_g, \mu_g) = \frac{1}{\mu_g}
\end{align*}
The left hand side of Equation \eqref{eq:constraint_optim_g} is $\frac{\mu_g}{\mu_g}=1$,
whereas the right side is
\begin{align*}
 &(1-\gamma) \tworowvec{\frac{1}{\mu_1}}{-\frac{1}{\nu_1}} \twocolvec{\mu_1}{\nu_1}
    -        \gamma \tworowvec{-\frac{1}{\mu_0}}{\frac{1}{\nu_0}} \twocolvec{\mu_0}{\nu_0} \\
    &= (1-\gamma)(1 - 1) - \gamma (-1 + 1) = 0
\end{align*}
Thus, no choice of $g$ can satisfy Equation \eqref{eq:constraint_optim_g} in this case.

\subsection{Algorithm: multiplicative treatment for ratio metrics}
Given the practical importance of ratio metrics, we present an explicit algorithm for this setting that implements Equation \eqref{eq:ratio_ratio}.
This algorithm is largely the same as in Section \eqref{section:additive_algorthm} with adjustments made for both $Y$ and $Z$.
\begin{enumerate}
    \item Divide the data randomly into $K$ equal folds. Let $(T_k, C_k)$ represent the samples in treatment and control for the $k^\mathrm{th}$ fold.
    
    \item Set the training data to all data except the $k^\mathrm{th}$ fold, i.e., $(T-T_k, C-C_k)$.
    
    \item Define 
    \begin{align*}
        {v}(0) &= \frac{y(0)}{\mean{y}(0)} - \frac{z(0)}{\mean{z}(0)} \\
        {v}(1) &= \frac{y(1)}{\mean{y}(1)} - \frac{z(1)}{\mean{z}(1)} 
    \end{align*}
 where $\mean{y}(0)$, $\mean{y}(1)$ are the sample averages of $y$ in control and treatment respectively of the training data. Likewise of $\mean{z}(0)$, $\mean{z}(1)$.
    
    \item Fit a model of the form $\Hat{v}(W,x)$ that minimizes $(\Hat{v}-v)^2$ on the training data,
    where $W=1$ for treatment $W=0$ for control.
    There are no constraints on the structure of this model in terms of how it is trains, regularizes or borrows strength.
    
    \item Using this model, compute
    \begin{align}
    g(x) &=  (1-\gamma) \Hat{v}(W=1,x) +  
    \gamma \Hat{v}(W=0,x)\label{eq:vgnorm}
    \end{align}
    for each observation in $T_k \cup C_k$.
    
    \item Estimate the adjusted metric as per Equation \eqref{eq:ratio_h}:
    \begin{align*}
        \metricadj^k &= \left( \log \frac{1}{|T_k|}\sum_{i \in T_k}y_i  - 
                             \log \frac{1}{|T_k|}\sum_{i \in T_k}z_i -
                             \frac{1}{|T_k|}\sum_{i \in T_k}g(x_i)
                            \right) \\
               &\quad - 
                       \left( \log \frac{1}{|C_k|}\sum_{j \in C_k}y_j  - 
                            \log \frac{1}{|C_k|}\sum_{j \in C_k}z_j -
                                 \frac{1}{|C_k|}\sum_{j \in C_k}g(x_j)
                            \right)
    \end{align*}

    \item Repeat for each fold and form the average $\mean{\metricadj}$. This is the final estimate.
\end{enumerate}
When the data is big (but effects small), $K=2$ will suffice.

\subsection{Additive treatment for ratio metrics}
\label{section:ratio_diff}
One might also imagine an additive treatment effect on ratios, i.e., a difference of ratios; for example, see \cite{ying}. We can apply our framework here as well.

As in Section \ref{section:ratio}, $Y$ is replaced by a $2$-vector $\twocolvec{Y}{Z}$.
The metric of interest is now $\frac{\E Y}{\E Z}$, and the treatment effect defined as
\begin{align*}
    \delta &= \frac{\mu_1}{\nu_1} - \frac{\mu_0}{\nu_0} \\
    \textrm{where\ }\twocolvec{\mu_1}{\nu_1} &= \E \twocolvec{Y(1)}{Z(1)} \\
    \textrm{and\ }  \twocolvec{\mu_0}{\nu_0} &= \E \twocolvec{Y(0)}{Z(0)}
\end{align*}
As before:
\begin{align*}
f(\twocolvec{\mean{Y}_1}{\mean{Z}_1}, \twocolvec{\mean{Y}_0}{\mean{Z}_0}) 
    &= \frac{\mean{Y}_1}{\mean{Z}_1} - \frac{\mean{Y}_0}{\mean{Z}_0} = \metric \\
h(\twocolvec{\mean{Y}_1}{\mean{Z}_1}, \twocolvec{\mean{Y}_0}{\mean{Z}_0}, \mean{G}_1, \mean{G}_0 ) 
    &= \left(\frac{\mean{Y}_1}{\mean{Z}_1} - \mean{G}_1 \right)  
     - \left(\frac{\mean{Y}_0}{\mean{Z}_0} - \mean{G}_0 \right) = \metricadj \\
v_a &= \diffprimea(\twocolvec{\mu_1}{\nu_1}, \twocolvec{\mu_0}{\nu_0}) 
    = \tworowvec{\frac{1}{\nu_1}}{-\frac{\mu_1}{\nu_1^2}} \\
v_b &= \diffprimeb(\twocolvec{\mu_1}{\nu_1}, \twocolvec{\mu_0}{\nu_0})
    = \tworowvec{-\frac{1}{\nu_0}}{\frac{\mu_0}{\nu_0^2}} \\
v_g &= \diffadjprimed(\twocolvec{\mu_1}{\nu_1}, \twocolvec{\mu_0}{\nu_0}, \mu_g, \mu_g) = 1
\end{align*}
\begin{align*}
g(x)
    &=  (1-\gamma) \E \left[\frac{Y(1)}{\nu_1} - \frac{\mu_1}{\nu_1^2}Z(1) \bigg| X=x \right]
    +       \gamma \E \left[\frac{Y(0)}{\nu_0} - \frac{\mu_0}{\nu_0^2}Z(0) \bigg| X=x \right]
\end{align*}
\begin{align*}
g(x)
    =  (1-\gamma)&\left( \frac{\E Y(1)}{\E Z(1)} \right)\E \left[\frac{Y(1)}{\E Y(1)} - \frac{Z(1)}{\E Z(1)} \bigg| X=x \right] \\
    +  \gamma &\left(\frac{\E Y(0)}{\E Z(0)} \right) \E \left[\frac{Y(0)}{\E Y(0)} - \frac{Z(0)}{\E Z(0)} \bigg| X=x \right]
\end{align*}

\subsection{Stable denominators and the like}
Some metrics involve random variables that are not affected by the treatment but nonetheless jointly vary with the response. For instance, we may wish to measure how a change in an app affects the days of user activity $Y$. However, if users come into the experiment at different times, we may wish to normalize by the number of days since first exposure (of treatment or control) $Z$ where $Z \geq 1$. This is sometimes referred to as the ``stable denominator assumption'' (e.g. \cite{ying} use it in a restricted sense). We use this example to illustrate how covariates not affected by the experiment may nonetheless play a role in the definition of treatment effect.

Two possible ways to define the metric ``Fraction of active days since exposure'' are $\frac{\E Y}{\E Z}$ and $\E \left(\frac{Y}{Z}\right)$. In the first case, the (multiplicative) treatment effect would be
\begin{align*}
\delta &= \log \frac{\E Y(1)}{\E Z} - \log \frac{\E Y(0)}{\E Z} \\
&= \log \E Y(1) - \log \E Y(0)
\end{align*}
Here $Z(1)$ and $Z(0)$ are replaced by $Z$ because it is unaffected by treatment. 
This case reduces to the simple (i.e., non-ratio) case of multiplicative treatment effects in Section \ref{section:mult_simple}. The only difference is that we may now use $Z$ as a covariate (i.e, a component of X), since it is unaffected by treatment. If $Y$ tends to be proportional to $Z$, we can leverage this knowledge directly by modeling
$Y \sim \beta Z$. But we are not constrained to do so. For example, if the relationship between $Y$ and $Z$ is sublinear for larger $Z$
might use
$Y \sim \beta Z^\eta$ for $0 < \eta < 1$.
The larger point here is that we do not have to embed data intuition in the metric itself and can let data decide.

The second definition of ``Fraction of active days since exposure'' leads to
\begin{align*}
\delta &= \log \E \left(\frac{Y(1)}{Z}\right) - \log \E \left(\frac{Y(0)}{Z}\right) \\
&= \log \E V(1) - \log \E V(0) \tag{where $V = \frac{Y}{Z}$}
\end{align*}
Here, $Z$ is both absorbed into the response $V$ and may also used to predict $V$.

\subsection{Ratio of variances}
The experimental effects of interest in most applications involve  means, either the difference or ratio of average response between treatment and control. Nevertheless, it is possible to use variance reduction to estimate the change in variance of a response. Even if the difference in group means is not significant, there may be an underlying difference in distribution that is most readily detected by a difference in variance of the response.

For instance, suppose a particular change to a product meant to make things easier for less experienced users works as intended for less experienced users, but negatively affects more experienced users. And further suppose that these effects cancel out in a metric that measures user benefit. Such an experiment may have little average treatment effect, but it might still have a sizable impact on different user segments. A statistically significant effect on each of several user segments would directly demonstrate this heterogeneous treatment effect, but may be hard to achieve due to sample size. On the other hand, a statistically significant change (in our example a reduction) in variance may be easier observed.

Consider the case in which the log difference of variances is of interest. We can write the estimand as
\begin{align*}
\delta &= \log\left(\E (Y(1)^2) - (\E Y(1))^2\right) 
        - \log\left(\E (Y(0)^2) - (\E Y(0))^2\right)
\end{align*}
Our variance reduction framework works for estimators that use sample means. We can express a simple estimator in this manner by defining the response as the tuple $\twocolvec{y}{y^2}$ whose expectation gives us $\twocolvec{\E Y}{\E Y^2}$, terms required for variance.
Let the sample means in treatment and control be $\twocolvec{\mean{Y}_1}{\mean{Y_1^2}}$ and $\twocolvec{\mean{Y_0}}{\mean{Y_0^2}}$ respectively. Thus
\begin{align*}
\metric &= f(\twocolvec{\mean{Y}_1}{\mean{Y_1^2}}, \twocolvec{\mean{Y}_0}{\mean{Y_0^2}})
 = \log \left( \mean{Y_1^2} - (\mean{Y}_1)^2 \right) - \log \left( \mean{Y_0^2} - (\mean{Y}_0)^2 \right)\\
\metricadj &= h(\twocolvec{\mean{Y}_1}{\mean{Y_1^2}}, \twocolvec{\mean{Y}_0}{\mean{Y_0^2}}, \mean{G_1}, \mean{G_0})
 = \log \left( \mean{Y_1^2} - (\mean{Y}_1)^2 \right) - \log \left( \mean{Y_0^2} - (\mean{Y}_0)^2 \right) - \mean{G_1} + \mean{G_0}
\end{align*}
Let the mean and variance of $Y(1)$ be $\mu_1$, $\sigma_1^2$ and for $Y(0)$ be $\mu_0$, $\sigma_0^2$. Then following through we find
\begin{align*}
    v_a &= \frac{1}{\sigma_1^2} \tworowvec{-2\mu_1}{1} \\
    v_b &= \frac{1}{\sigma_0^2} \tworowvec{2\mu_0}{-1} \\
    v_g &= 1 \\
    g(x) &= (1-\gamma) \frac{1}{\sigma_1^2}\E[Y(1)^2 - 2\mu_1 Y(1) |X=x] 
              + \gamma \frac{1}{\sigma_0^2}\E[Y(0)^2 - 2\mu_0 Y(0) |X=x] 
\end{align*}
Since our $h$ is invariant to an additive constant in $g(x)$ we can simplify:
\begin{align*}
    g(x) &= (1-\gamma) \E \left[\left(\frac{Y(1) - \mu_1}{\sigma_1}\right)^2 \bigg|X=x \right]
              + \gamma \E \left[\left(\frac{Y(0) - \mu_0}{\sigma_0}\right)^2 \bigg|X=x \right]
\end{align*}
Of course, $\mu_1$, $\sigma_1$, $\mu_0$, $\sigma_0$ are unknown. We use cross-fitting as in \eqref{section:additive_algorthm} to estimate these parameters and choose $g$ with one portion of the data and apply it to a different portion of the data to make a variance-reduced estimate.

\subsection{Regression and trend estimation}
In certain situations, the experiment effect we wish to estimate is the coefficient of a regression model. One such case is an estimate of the time trend in daily comparisons between treatment and control. For instance, it might be the case that a new feature takes time to affect the response. The trend may become obvious only after a considerable amount of time, and even then, eye-balling the trend in a noisy time series is challenging. It is preferable to make a statistical estimate of the trend directly, along with a measure of statistical uncertainty.

Imagine that a treatment effect has a value at different points in time such that $\delta_t$ denotes the treatment effect at discrete time unit $t$. Assume $t \in \mathbb{Z}$ indexes the day of the experiment. A linear trend is a linear relationship between $t$ and $\delta_t$, i.e., $\beta_1$ estimated by fitting the model
\begin{align*}
    \delta_t &= \beta_0 + \beta_1 t
\end{align*}
Likewise for quadratic or other function of $t$. We can estimate trend coefficients by regressing the daily estimates of treatment effect $\metric_t$ against functions of $t$. In the linear case, the coefficients $\beta_0$ and $\beta_1$ are estimated by regression against a design matrix $A$ whose columns are $1$ and the linear sequence of $t$. For example, below is $A$ for an experiment lasting five days:
\begin{align*}
A &= 
\begin{bmatrix}
1 & 1 \\ 
1 & 2 \\
1 & 3 \\
1 & 4 \\
1 & 5
\end{bmatrix}
\end{align*}
The regression coefficients $\hat{\beta}$ are given by the well-known least squares estimate
\begin{align*}
    \hat{\beta} &= (A^T A)^{-1} A^T \metric \\
    &= \begin{bmatrix}
    0.8 & 0.5 & 0.2 & -0.1 & -0.4 \\
   -0.2 & -0.1 & 0.0 & 0.1 & 0.2
    \end{bmatrix} \metric \tag{for 5 time points}
\end{align*}
where  $\metric$ is a column vector of daily estimates of the treatment effect. The point is that $(A^T A)^{-1} A^T$ is just a constant, depending only on the number of time points. Thus, the regression estimate of the linear trend $\beta_1$ is a fixed linear combination of the daily estimates of the experiment effect:
\begin{align*}
    \hat{\beta}_1 &= a^T \metric
\end{align*}
Likewise for any other regression coefficients.

Suppose we wish to use the framework of this paper to reduce the variance of the estimate of $\beta_1$. One way is simply to apply variance reduction independently on each day, and then estimate the linear trend $\beta_1$ as a linear combination of these adjusted daily estimates. If the daily estimates are unbiased, the estimate of trend will also be unbiased. In this approach, we would choose a $g$ for each day independently, call it $g_t$. Perhaps this works out fine, in which case there is nothing more to say. But given that the modeling for each $g_t$ uses only data from a single day, the model may suffer from data sparseness. The question is whether there is another approach that allows us to borrow strength across days. When $h$ is an additive adjustment as in Equation \eqref{eq:simple_h}, there is indeed a simpler solution.

We start with a concrete example, say, a multiplicative treatment effect on a log scale. Earlier, we found that the optimal $g$ is given by Equation \eqref{eq:best_g_logs}.
Now we have daily estimates such that we index all variables with an additional $t$. Thus
\begin{align*}
    \metric_t &= \log \mean{Y}_{t,1} - \log \mean{Y}_{t,0} \\
    \metric_{\mathrm{adj},t} &= \log \mean{Y}_{t,1} - \log \mean{Y}_{t,0}  - \mean{G}_{t,1} + \mean{G}_{t,0}
\end{align*}
where $\mean{Y}_{t,1}$, $\mean{Y}_{t,0}$ are the average response on Day $t$ in treatment, control and  $\mean{G}_{t,1}$, $\mean{G}_{t,0}$ are the average value of $g_t(x)$ evaluated on treatment, control.

Let $\metric$ and $\metricadj$ denote the column vector of daily values $\metric_t$ and $\metric_{\mathrm{adj},t}$ respectively. Then the adjusted estimate for the linear trend coefficient is
\begin{align*}
    \mathrm{Adjusted\ }\hat{\beta}_1 &= a^T \metricadj \\
    &= \sum_t a_t \metric_{\mathrm{adj},t} \tag{ $a_t$ is the $t^\textrm{th}$ daily component of $a$}\\
    &= \sum_t a_t \left(\log \mean{Y}_{t,1} - \log \mean{Y}_{t,0}  - \mean{G}_{t,1} + \mean{G}_{t,0} \right) \\
    &= a^T \metric - \sum_t a_t \mean{G}_{t,1} + \sum_t a_t  \mean{G}_{t,0} \\
    &= a^T \metric - \mean{G}_1 + \mean{G}_0
\end{align*}
where $\mean{G}_1$, $\mean{G}_0$ are $g(x)$ evaluated over treatment, control and
\begin{align*}
    g(x) &= \sum_t a_t g_t(x)
\end{align*}
In other words, linearity allows us to define a single optimal $g$ as the weighted sum of daily $g_t$.

Recall that for multiplicative treatment effect on a log scale, Equation \eqref{eq:best_g_logs} gives us single-day optimal $g$ as follows
\begin{align}
g_t(x) &= 
      (1-\gamma) \frac{\E[Y_t(1)|X=x]}{\E Y_t(1)} + \gamma \frac{\E[Y_t(0)|X=x]}{\E Y_t(0)} \notag \\
 g(x) &= \sum_t a_t \left(
      (1-\gamma) \frac{\E[Y_t(1)|X=x]}{\E Y_t(1)} + \gamma \frac{\E[Y_t(0)|X=x]}{\E Y_t(0)}
      \right) \notag \\
g(x) &=
      (1-\gamma) \E \left[ \sum_t a_t \frac{Y_t(1)}{\E Y_t(1)} \bigg| X = x \right]
        + \gamma \E \left[ \sum_t a_t \frac{Y_t(0)}{\E Y_t(0)} \bigg| X = x \right] \label{eq:g_trend}
\end{align}
where $Y_t(1)$, $Y_t(0)$ are the individual random responses in treatment, control on Day $t$. Thus Equation \eqref{eq:g_trend} gives us a single optimal $g$ for the entire experiment.

To model $g(x)$ we first define
\begin{align*}
v(W, x) &= \E \left[ \sum_t a_t \frac{Y_t(W)}{\E Y_t(W)} \bigg| X = x \right]
\end{align*}
where $W=1$ for treatment and $W=0$ for control. Again using cross-fitting, we estimate $\hat{v}(W, x)$ by means of a prediction model that minimizes mean squared error. The prediction target for each observation is a weighted sum of daily responses scaled by average daily response. Finally, we estimate
\begin{align*}
    g(x) &=  (1-\gamma)\Hat{v}(W=1,x) +  \gamma \Hat{v}(W=0,x)
\end{align*}

\section{Practical considerations}
\subsection{Modeling}
While there are no constraints on the model $\Hat{y}(W,x)$, there are modeling choices to be made. For instance, to what extent we should borrow strength across treatment and control observations depends on how much data there is in the smaller arm versus how different the data is in the two arms.

We therefore recommend a model that shrinks predictions towards a pooled model (e.g., through regularization), especially if treatment effects tend to be small. Given how it is used, such a model will only affect efficiency and not cause bias in our estimate of the treatment effect. Furthermore, any loss in efficiency for large effects may be an acceptable trade off, large effects being inherently easier to identify.
Some additional choices are worth discussing:
\begin{itemize}
    \item One complication to modeling is that the relative weights of the two arms of the experiment move in the ``wrong" (undesirable) direction. That is, the smaller $\gamma$, the less of it we have as a fraction of the total data but the greater its contribution to $g(x)$ as shown in Theorem \eqref{theorem:General}. We will call this phenomenon the \textbf{wrong direction of weights}.
    
    \item Since treatment and control contribute unequally to $g(x)$ when $\gamma\neq \frac{1}{2}$, it is reasonable to give weights in ratio $(1-\gamma):\gamma$ to observations in treatment\,:\,control for the model $\Hat{y}(W, x)$ (or $\Hat{v}(W, x)$ for ratios), though this also an empirical question. When $\gamma$ is far from $\frac{1}{2}$, the model should borrow strength across arms.
    
    \item The logical extreme of regularizing towards a pooled model is a simple pooled model itself, in which $W$ plays no role. Whether this simplification is worthwhile is an empirical question. But in general, it will not provide maximum variance reduction if the effects are large. Furthermore, if the arms of the experiment are unequal, the wrong direction of weights may exacerbate the situation.
    
    \item Given the unequal contribution of treatment and control to $g(x)$, it is tempting to try to model $\Hat{g}$ directly by building a pooled model with relative weights for treatment\,:\,control of $(1-\gamma)^2:\gamma^2$. The problem with a model trained on this data is that it will borrow vanishing strength from the larger control arm as $\gamma \to 0$. Vice versa when $\gamma \to 1$.

    \item More complex prediction models require more data, thereby warranting $K > 2$. But given diminishing returns, there is not much reason for $K$ to be very large.
    
    \item Beyond variance reduction, an explicit model of treatment effects may also help identify where the treatment effect is strongest. This gives additional guidance to the experimenter.
\end{itemize}

\subsection{Prediction accuracy and variance reduction}
Thus far, we have sought $g$ that delivers optimal variance reduction. But for practical applications, we would like to the know the impact on variance reduction if we use $\widehat{g}$ that deviates from optimality. From Equation \eqref{eq:g_suboptim}
\begin{align*}
\VarDM \seq{\metricadj} &= \VarDM \seq{\metricadj^*} + 
\left(\frac{1}{\lambda_1} + \frac{1}{\lambda_0} \right)
\Var \left(v_g (\widehat{g}(X) - g(X)) \right) 
\end{align*}
Assuming $\widehat{g}(X)$ is unbiased for $g(X)$, the additional variance incurred for a suboptimal $g$ is
\begin{align*}
v_g^2\left(\frac{1}{\lambda_1} + \frac{1}{\lambda_0} \right)   
\E(\widehat{g}(X) - g(X))^2
\end{align*}
which is directly proportional to the mean squared error (MSE) in $\widehat{g}(X)$.

\subsection{Variance reduction potential}
Variance reduction in large-scale online systems requires an investment in engineering infrastructure. Engineering costs will depend on the set of covariates employed. It is therefore important to size the opportunity for variance reduction prior to any investment. This can be achieved by building a model using the proposed set of covariates to predict the response.

If data from past experiments is available, it may be possible to estimate $\VarDM \seq{\metricadj}$ directly from Equation \eqref{eq:g_suboptim}. If the data for comparable experiments is limited, we can estimate the impact of variance reduction under the assumption that the response under treatment $Y(1)$ does not have a significantly different distribution than $Y(0)$ beyond a small additive effect. The logic here is that variance reduction is most necessary when the treatment effects are small, and hence don't change the distribution of $Y$.

Let's look at this assumption in two different cases: first the case of additive treatment effects.
From Section \ref{section:addtive}, we have $v_a = 1$, $v_b=-1$, $v_g=1$:
\begin{align*}
    \VarDM \seq{\metricadj}
    &= 
        \frac{1}{\lambda_1} \Var \left(Y(1) - \widehat{g}(X) \right) + 
        \frac{1}{\lambda_0} \Var \left(Y(0) - \widehat{g}(X) \right) 
        \tag{from Equation \eqref{eq:vardm_optim}} \\
    &\approx \frac{1}{\lambda_1} \Var \left(Y(0) - \widehat{g}(X) \right) + 
        \frac{1}{\lambda_0} \Var \left(Y(0) - \widehat{g}(X) \right) 
        \tag{assuming similar distributions}\\
    &= \left(\frac{1}{\lambda_1} + \frac{1}{\lambda_0}\right)
        \Var \left(Y(0) - \widehat{g}(X) \right)
\end{align*}
Since $g$ is a convex combination of $\E Y(1)$ and $\E Y(0)$ up to an additive constant, $g$ reduces in this case simply to $g(x) = \E [Y(0)|X=x]$,
i.e., to predicting $Y(0)$ from $X$, and that is what we see here.

Meanwhile unadjusted variance is
\begin{align*}
\VarDM \seq{\metric}
    &= 
        \frac{1}{\lambda_1} \Var Y(1) + \frac{1}{\lambda_0} \Var Y(0) \\
    &\approx 
        \left(\frac{1}{\lambda_1} + \frac{1}{\lambda_0}\right) \Var Y(0) \\
\Rightarrow \frac{\VarDM \seq{\metricadj}}{\VarDM \seq{\metric}}
&\approx \frac{\Var \left(Y(0) - \widehat{g}(X) \right)}{\Var Y(0)} \\
&=  \frac{\E \left(Y(0) - \widehat{g}(X) \right)^2}{\Var Y(0)}
\end{align*}
where the last step assumes $\widehat{g}$ is unbiased for $g$.
In other words, the proportion $\VarDM \seq{\metric}$ remaining is approximately equal to the MSE in predicting $Y$ as a fraction of its original variance. This gives us an estimate for how much variance reduction we can expect. For difference effects, the asymptotic variance matches the small sample variance (i.e., $\VarDM = \Var$) and this result can be derived more directly.

We can repeat this exercise for multiplicative treatment effect for ratios. Following the treatment in Section \ref{section:ratio}, we arrive at a similar result:
\begin{align*}
\frac{\VarDM \seq{\metricadj}}{\VarDM \seq{\metric}}
&\approx \frac{\E \left(V(0) - \widehat{v}(X) \right)^2}{\Var V(0)} \\
\mathrm{where\ }V(0) &= \frac{Y(0)}{\E Y(0)} - \frac{Z(0)}{\E Z(0)}
\end{align*}
In this case we must rely on asymptotic variance.

\section{Comparison with other methods}
Earlier work on variance reduction in online experiments involved simple models in which there is a linear relationship between scalar $X$ and $Y$. The CUPED algorithm \citep{cuped} is one example, that uses a common model for $Y(0)$ and $Y(1)$. A more general model is given by \citep{soriano}, in which there is a different coefficient for $Y(0)$ and $Y(1)$. Due to the scalar nature of $X$ considered, the models are employed with $X$ as pre-experiment values of the metric represented by $Y$. But one advantage of such simple models is that they do not need to concern themselves with bias due to over-fitting or regularization. Furthermore, they have computational advantage in being able to work with aggregates (sums being sufficient statistics for linear regression). See also \cite{lin2013}, which similarly minimizes asymptotic variance with linear model assumptions.

When computation is not a concern and large models are employed, model bias is a central problem to be addressed. One theoretic approach to bias is given in \citep{hosseini}. Other approaches are inspired by Double ML \citep{doubleml}. Even though Double ML was designed for causal inference without randomization, we may apply it directly to the problem of variance reduction as follows:
\begin{enumerate}
    \item Regress $Y$ on $X$ using cross-fitting to estimate residuals $\Tilde{Y} = Y - \Hat{Y}(X)$. The model pools both arms of the experiment without distinction.
    \item Compute residual for treatment variable $W$ as $\Tilde{W} = W - \E W$.
    \item Regress $\Tilde{Y}$ on $\Tilde{W}$. The regression coefficient is an estimate of the additive treatment effect.
\end{enumerate}
Such an approach works because the cross-fitting removes bias due to over-fitting and regularization subject to some constraints on the models used (see \citep{doubleml} for details). Modulo model bias elimination, the approach is equivalent to building a pooled model for the portion of $Y$ predictable by $X$. As such, it has the limitation of the pooled model when $p$ is far from $\frac{1}{2}$.

\citep{ying} uses an approach inspired by Double ML, and comes closest to this paper. Their idea is to exploit the semi-parametric optimality of the AIPW estimator \citep{robins}, \citep{davidian}. The AIPW estimator is typically employed to estimate causal effects from observational data. By using both propensity scores and outcome models, AIPW is doubly robust to errors in either one. The AIPW estimator is given by  Equation (9) of \citep{davidian} can be written in our notation as
\begin{align*}
   \mathit{AIPW} = \frac{1}{|T|+|C|}  \sum_{i \in T \cup C} 
   &\frac{W_i Y_i - (W_i - \Hat{p}(X_i)) \Hat{y}(1, X_i)}{\Hat{p}(X_i)} \\
   &-\frac{(1-W_i) Y_i + (W_i - \Hat{p}(X_i)) \Hat{y}(0, X_i)}{1-\Hat{p}(X_i)} 
\end{align*}
where $W_i$ is the assignment of the experiment sample $i$,
and $\Hat{p}(x)$, $\Hat{y}(w, x)$ are  estimated propensity score and estimated outcome model.
When $\Hat{p}(x) = p$ for a randomized experiment, this expression reduces to our Equation \eqref{eq:g_diff}.
Following this approach for additive treatment effect, \citep{ying} build two models $\Hat{y}(0)(x)$ and $\Hat{y}(1)(x)$ using cross-fitting. This approach does not generalize to multiplicative effects.

Except for \citep{soriano}, none of the works cited deal with multiplicative treatment effects. Asymptotic optimality demonstrated in this paper is a primary contribution. Only \citep{ying} deals with large models for ratio metrics (with additive treatment effects) but their treatment requires training models for both $Y$ and $Z$ (for each of treatment and control). In contrast, our solution is simpler, as are our proofs. Moreover, prior work either models a single arm (e.g. \citep{hosseini}) or requires models for treatment and control to be trained separately (e.g. \citep{ying}). The former is suboptimal while the latter is problematic when one arm is much smaller than the other. Left unanswered is the practical modeling question of how best to borrow strength across models for $\Hat{y}(W=1,x)$ and $\Hat{y}(W=0,x)$. By focusing our attention specifically on the properties of optimal $g(x)$, we are able to shed light on this area.

\bibliographystyle{apalike} 
\bibliography{References}

@book{imbens_rubin, place={Cambridge}, title={Causal Inference for Statistics, Social, and Biomedical Sciences: An Introduction}, DOI={10.1017/CBO9781139025751}, publisher={Cambridge University Press}, author={Imbens, Guido W. and Rubin, Donald B.}, year={2015}}

@article{doubleml,
    author = {Chernozhukov, Victor and Chetverikov, Denis and Demirer, Mert and Duflo, Esther and Hansen, Christian and Newey, Whitney and Robins, James},
    title = "{Double/debiased machine learning for treatment and structural parameters}",
    journal = {The Econometrics Journal},
    volume = {21},
    number = {1},
    pages = {C1-C68},
    year = {2018},
    month = {01},
    issn = {1368-4221},
    doi = {10.1111/ectj.12097},
    url = {https://doi.org/10.1111/ectj.12097},
    eprint = {https://academic.oup.com/ectj/article-pdf/21/1/C1/27684918/ectj00c1.pdf},
}

@inproceedings{cuped,
  author = {Deng, Alex and Xu, Ya and Kohavi, Ron and Walker, Toby},
  title = {Improving the sensitivity of online controlled experiments by utilizing pre-experiment data},
  year = {2013},
  booktitle = {Proceedings of the Sixth ACM International Conference on Web Search and Data Mining},
  doi = {10.1145/2433396.2433413}
}

@article{ying,
author = {Ying Jin and Shan Ba},
title = {Toward Optimal Variance Reduction in Online Controlled Experiments},
journal = {Technometrics},
volume = {65},
number = {2},
pages = {231--242},
year = {2023},
publisher = {Taylor \& Francis},
doi = {10.1080/00401706.2022.2142670},
URL = {https://doi.org/10.1080/00401706.2022.2142670},
eprint = {https://doi.org/10.1080/00401706.2022.2142670}
}

@article{lin2013,
author = {Winston Lin},
title = {{Agnostic notes on regression adjustments to experimental data: Reexamining Freedman’s critique}},
volume = {7},
journal = {The Annals of Applied Statistics},
number = {1},
publisher = {Institute of Mathematical Statistics},
pages = {295 -- 318},
year = {2013},
doi = {10.1214/12-AOAS583},
URL = {https://doi.org/10.1214/12-AOAS583}
}

@misc{hosseini,
Author = {Reza Hosseini and Amir Najmi},
Title = {Unbiased variance reduction in randomized experiments},
Year = {2019},
Eprint = {arXiv:1904.03817},
howpublished = "\url{https://arxiv.org/abs/1904.03817}"
}

@misc{soriano,
Author = {Jacopo Soriano},
Title = {Percent Change Estimation in Large Scale Online Experiments},
Year = {2017},
Eprint = {arXiv:1711.00562},
howpublished = "\url{https://arxiv.org/abs/1711.00562}"
}

@article{davidian,
author = {Lunceford, Jared K. and Davidian, Marie},
title = {Stratification and weighting via the propensity score in estimation of causal treatment effects: a comparative study},
journal = {Statistics in Medicine},
volume = {23},
number = {19},
pages = {2937-2960},
doi = {https://doi.org/10.1002/sim.1903},
url = {https://onlinelibrary.wiley.com/doi/abs/10.1002/sim.1903},
eprint = {https://onlinelibrary.wiley.com/doi/pdf/10.1002/sim.1903},
year = {2004}
}

@article{robins,
author = { James M.   Robins  and  Andrea   Rotnitzky  and  Lue   Ping   Zhao },
title = {Estimation of Regression Coefficients When Some Regressors are not Always Observed},
journal = {Journal of the American Statistical Association},
volume = {89},
number = {427},
pages = {846-866},
year  = {1994},
publisher = {Taylor & Francis},
doi = {10.1080/01621459.1994.10476818},
URL = { https://doi.org/10.1080/01621459.1994.10476818 },
eprint = { https://doi.org/10.1080/01621459.1994.10476818 }
}

\end{document}